\algrenewcommand{\algorithmiccomment}[1]{\hfill\textcolor{blue}{\(\triangleright\) #1}}
\tikzset {
	->,
	>=stealth',
	node distance=3cm,
	every state/.style={thick, fill=gray!10},
	initial text=$ $,
}
\definecolor{dartmouthgreen}{rgb}{0.05, 0.5, 0.06}
\definecolor{firebrick}{rgb}{0.7, 0.13, 0.13}
\definecolor{C-3-1}{RGB}{27,158,119}
\definecolor{C-3-2}{RGB}{217,95,2}
\definecolor{C-3-3}{RGB}{117,112,179}
\colorlet{disabled}{lightgray}
\colorlet{statecolor}{black}
\colorlet{rewardcolor}{C-3-2}
\colorlet{actioncolor}{black}
\colorlet{probcolor}{black}%{C-3-3} % Jan: can't see anything
\colorlet{highlightcolor}{C-3-1}
\tikzset{
	every picture/.style={
		>=stealth,
	},
	every edge/.append style={
		thick
	},
	highlight/.style={draw=rewardcolor,thick},
	state/.style={minimum width=0.8cm,minimum height=0.8cm,inner sep=1pt,rectangle,rounded corners,statecolor,draw},
	triangle/.style={regular polygon, regular polygon sides=3},
	max vertex/.style={state,path picture={\draw[gray,rounded corners=0,fill,fill opacity=0.2,thick] (path picture bounding box.south west) -- (path picture bounding box.north) -- (path picture bounding box.south east) -- cycle;}},
	min vertex/.style={state,path picture={\draw[gray,rounded corners=0,fill,fill opacity=0.2,thick] (path picture bounding box.north west) -- (path picture bounding box.south) -- (path picture bounding box.north east) -- cycle;}},
	%goal/.style={state,font=\faGem,dartmouthgreen,fill=dartmouthgreen!15},
	%sink/.style={state,font=\faSkull,firebrick,fill=firebrick!15},
	goal/.style={state,font=\textsf{f},dartmouthgreen,fill=dartmouthgreen!15},
	sink/.style={state,font=\textsf{z},firebrick,fill=firebrick!15},
	action/.style={font=\small,inner sep=0pt,outer sep=3pt},
	reward/.style={inner sep=0pt,outer sep=3pt},
	actionnodelarge/.style={draw=actioncolor,rectangle,rounded corners=2pt},
	actionnode/.style={circle,draw=black,fill=black,minimum size=1mm,inner sep=0,outer sep=0,color=actioncolor},
	actionedge/.style={-,draw,color=actioncolor},
	prob/.style={font=\scriptsize,inner sep=0pt,outer sep=2pt,color=probcolor},
	probedge/.style={->,draw,color=probcolor},
	directedge/.style={->,draw,color=actioncolor}
}
\renewcommand{\circ}{\bigcirc}
\newcommand{\Naturals}{\mathbb{N}}
\newcommand{\Reals}{\mathbb{R}}
\newcommand{\Rationals}{\mathbb{Q}}
\newcommand{\abs}[1]{\lvert #1 \rvert}
\DeclareMathOperator*{\argmax}{arg\, max}
\newcommand{\eqdef}{\vcentcolon=}
\newcommand{\distribution}{\delta}
\newcommand{\Dist}{\mathsf{Dist}}
\newcommand{\post}{\mathsf{Post}}
\newcommand{\para}[1]{\emph{#1}\hspace{0.0em}}
\renewcommand{\Box}{\vartriangle}
\renewcommand{\circ}{\triangledown}
\newcommand{\G}{\mathcal{G}}
\newcommand{\states}{\mathsf{S}}
\newcommand{\minStates}{\states_{\circ}}
\newcommand{\maxStates}{\states_{\Box}}
\newcommand{\act}{\mathsf{A}}
\newcommand{\Av}{\mathsf{Av}}
\newcommand{\trans}{\delta}
\newcommand{\sink}{\mathsf{z}}
\newcommand{\sinks}{\mathsf{Z}}
\newcommand{\target}{\mathsf{f}}
\newcommand{\targets}{\mathsf{F}}
\newcommand{\unknown}{\states^?}
\newcommand{\arbsigma}{\sigma^{\mathsf{arb}}}
\newcommand{\arbtau}{\tau^{\mathsf{arb}}}
\newcommand{\stratsMax}{\Sigma_\Box}
\newcommand{\stratsMin}{\Sigma_\circ}
\newcommand{\Paths}{\mathsf{Paths}}
\newcommand{\reach}{\lozenge}
\newcommand{\stay}{\square}
\newcommand{\probability}{\mathcal{P}}
\newcommand{\val}{\mathsf{V}}
\newcommand{\lb}{\mathsf{L}}
\newcommand{\ub}{\mathsf{U}}
\newcommand{\svireach}{\newlink{def:reach-stay}{\mathsf{reach}}\xspace}
\newcommand{\svistay}{\newlink{def:reach-stay}{\mathsf{stay}}\xspace}
\newcommand{\glb}{\mathsf{l}}
\newcommand{\gub}{\mathsf{u}}
\newcommand{\mec}{\mathsf{Y}}
\newcommand{\decval}{\mathsf{decval}}
\newcommand{\deltastay}{\Delta_{\text{stay}}}
\newcommand{\bestExit}{\mathsf{bexit}}
\newcommand{\bestExitSet}{\mathfrak{B}}
\definecolor{myblue}{RGB}{0, 102, 204}
\newcommand\newtarget[2]{ \phantomsection\hypertarget{#1}{#2}}
\newcommand\newlink[2]{{\protect\hyperlink{#1}{\color{myblue} #2}}}
\newcommand{\trap}{\newlink{def:trap}{\textsf{trap}}\xspace}
\newcommand{\opt}{\newlink{def:opt}{\textsf{opt}}\xspace}
\newcommand{\drawdummy}{\node[minimum size=0,inner sep=0]}
\theoremstyle{definition}
\newtheorem{definition}{Definition}[section]  % [section] → numbered per section
\theoremstyle{plain}
\newtheorem{theorem}[definition]{Theorem}
\theoremstyle{remark}
\newtheorem{remark}[definition]{Remark}
\newtheorem{example}[definition]{Example}
\tikzset{node distance=2.5cm, %Minimum distance between nodes
	every state/.style={minimum size=0pt, fill=gray!10,rectangle,
		align=center},
	initial text={},
	every picture/.style={>=stealth'},
	brace/.style={decorate,decoration=brace}, semithick}
\newcommand{\myspace}{\vspace*{0em}}
\newcommand{\myspaceb}{\vspace*{0em}}
\renewcommand{\myspace}{\vspace*{0em}}
\renewcommand{\myspaceb}{\vspace*{0em}}
\title{Sound Value Iteration for Simple Stochastic Games\thanks{
This research was supported in part by the DFG project GOPro (427755713), the MUNI Award in Science and Humanities (MUNI/I/1757/2021), the EU Horizon Europe Grant (101171844), the EU’s Horizon 2020 Marie
Sklodowska-Curie grant (101034413), and the ERC Starting Grant DEUCE (101077178).}
%This research has received funding from the European Union under Grant Agreement No. 101171844 and from the MUNI Award in Science and Humanities MUNI/I/1757/2021 of the Grant Agency of Masaryk University (Kretinsky). %Views and opinions expressed are, however, those of the authors only and do not necessarily reflect those of the European Union or European Research Executive Agency. Neither the European Union nor the granting authority can be held responsible for them. 
%This research has also received funding under the Marie Sklodowska-Curie grant agreement No 101034413, and the ERC Starting Grant DEUCE (101077178) (Weininger).}
}
\author{Muqsit Azeem
	\institute{Technical University of Munich}
	\and
	Jan Kretinsky
	\institute{Masaryk University}
	\and
	Maximilian Weininger
	\institute{Ruhr-University Bochum}
}
\newcommand{\titlerunning}{Sound Value Iteration for Simple Stochastic Games}
\newcommand{\authorrunning}{M. Azeem, J. Kretinsky \& M. Weininger}
\begin{document}
	\maketitle
	
	\begin{abstract}
		Algorithmic analysis of Markov decision processes (MDP) and stochastic games (SG) in practice relies on value-iteration (VI) algorithms. Since basic VI does not provide guarantees on the precision of the result, variants of VI have been proposed that offer such guarantees.
		In particular, sound value iteration (SVI) not only provides precise lower and upper bounds on the result, but also converges faster in the presence of probabilistic cycles.
		Unfortunately, it is neither applicable to SG, nor to MDP with end components.
		In this paper, we extend SVI and cover both cases.
		The technical challenge consists mainly in proper treatment of end components, which require different handling than in the literature.
		Moreover, we provide several optimizations of SVI.
		Finally, we evaluate our prototype implementation experimentally to
		demonstrate its potential on systems with probabilistic cycles.
	\end{abstract}
	
	%\newpage
\section{Introduction}

\para{Value iteration (VI)} \cite{bellman} is the practically most used method for reliable analysis of probabilistic systems, in particular Markov decision processes (MDPs) \cite{Puterman} and stochastic games (SGs) \cite{condonComplexity}.
It is used in the state-of-the-art model checkers such as Prism \cite{prism4} and Storm \cite{Storm} as the default method due to its better practical scalability, compared to strategy iteration or linear/quadratic programming \cite{revisedPractitionerGuide,gandalf20}.
The price to pay are issues with precision.
Firstly, while other methods yield precise results in theory (omitting floating-point issues), VI converges to the exact result only in the limit.
Secondly, the precision of the intermediate iterations was until recently an open question.
Given the importance of reliable precision in verification, many recent works focused on modifying VI so that the imprecision can be bounded, yielding a stopping criterion.
Consequently, (i) the computed result is reliable, and (ii) the procedure can even terminate earlier whenever the desired precision is achieved.

\begin{wrapfigure}{r}{0.28\textwidth}%[!h]
%\begin{figure}[!h]%{r}{0.38\textwidth}%[!h]
	%\vspace*{-3em}
	\begin{center}
		\resizebox{0.28\textwidth}{!}{
		\begin{tikzpicture}
			%little BCEC with loop
			%\node[max vertex, ] (s0) {$s_0$};
			\node[max vertex, ] at (2,0) (s) {$s$};
			\node[actionnode] at (3,0) (sc) {};
			\node[goal] at (4.5,1) (goal) {};
			\node[sink] at (4.5,-1) (sink) {};
			
			% Adding initial state arrow
			\draw[->, thick] (1,0) -- (s);
			
			\path[->]
			;
			\path[actionedge]
			(s) edge node[action,above] {} (sc)
			;
			\path[probedge]
			(sc) edge[out=90,in=75] node[prob,above] {$p$} (s)
			(sc) edge node[prob,above] {$r$} (goal)
			(sc) edge node[prob,below] {{$1-(r+p)~~~~~~~~~$}} (sink)
			;
			\draw[->]  (goal) to[loop right]  node [midway,anchor=west] {} (goal);
			\draw[->]  (sink) to[loop right]  node [midway,anchor=west] {} (sink);
		\end{tikzpicture}
	}
	\end{center}
	\caption{A Markov chain with an initial state $s$, a target state $\target$, and a sink/zero state $\sink$}
	\label{fig:3states}
	
\end{wrapfigure}
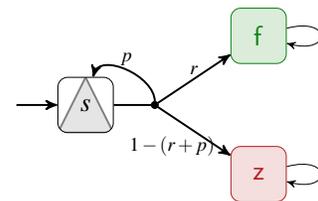
%\end{figure}

Focusing on reachability properties here, the methods for computing lower and upper bounds on such the reachability probability include `collapsing' \emph{end-components (ECs)} in graphs (\emph{bounded value iteration (BVI)} \cite{atva} a.k.a.\ \emph{interval iteration} \cite{hm18}, \emph{bounded real-time dynamic programming} \cite{atva}), reducing the upper bound in ECs through `deflating'~\cite{KKKW18,DBLP:conf/lics/KretinskyMW23}, finding augmenting-like paths (\emph{widest path} \cite{wp}), guessing the upper bound from the lower one (\emph{optimistic VI} \cite{OVI}), or \emph{sound VI} (SVI) \cite{DBLP:conf/cav/QuatmannK18}.

The main idea of SVI is to estimate the value by a geometric series describing the probabilities to (i)~reach the target in a given number of steps and (ii)~to remain undetermined in that number of steps.
For instance, in the system depicted in Fig.~\ref{fig:3states}, SVI essentially computes that within $n$ steps, the probability to reach the target is $\svireach_n=r+pr+p^2r+\cdots+p^{n-1}r$.
Moreover, since the probability to stay in undetermined states (neither target, nor sink) is $\svistay_n=p^n$, it deduces that the infinite-horizon reachability probability is in the interval $[\svireach_n, \svireach_n+\svistay_n]$.
Not only does the method provide such valid lower and upper bounds, but it also terminates (bounds are close enough to each other) faster in the presence of such probabilistic cycles, in contrast to classical BVI~\cite{DBLP:conf/cav/QuatmannK18}.

Unfortunately, SVI is only applicable to MDP without \emph{end components} (ECs).
This simplifying restriction is common (stochastic shortest path in MDP \cite{ssp}, stopping games \cite{shapley}), but also very severe.
In this paper, \textbf{we extend SVI (i) to SG and (ii) to systems (both MDP and SG) with end components}.

First, we extend SVI to SG without ECs. 
While the algorithm is similar with analogously dual steps for the other player, the correctness is surprisingly not immediate. 
Indeed, the underlying optimization objective in SVI is not simple reachability but a non-trivial combination of step-bounded reachability of targets and of undetermined states.
This in turns makes these strategies require memory; the interaction of two memory-dependent strategies then convolutes constraints on the upper bound.

Second, we extend SVI to handle ECs.
Typically, the literature on guaranteed-precision VI, i.e.~with lower and upper bounds, considers only the case without ECs \cite{DBLP:conf/cav/QuatmannK18}; for MDP, the ECs can be \emph{collapsed} before the analysis, avoiding the issue, but \emph{not so for SG}.
There, the main approach to treat ECs is \emph{deflating} \cite{KKKW18}, which adjusts the upper bound of each state to reflect the true probability of reaching the target \emph{outside} of the EC.
Since SVI does not work with the explicit upper bounds, but rather with an implicit version given by other values (such as $\svireach_n$ and $\svistay_n$), deflating cannot be performed directly.
Consequently, (i)~instead of a single (S)VI update, there are four different cases to handle; (ii)~the fundamental notion of the \emph{simple end component} (~\cite{KKKW18}, EC states with the same value), instrumental for other VI extensions, is too coarse; instead, we have to consider yet smaller parts; and
(iii)~the convergence of the bounds does not follow directly from the literature.
In particular, while convergence of the lower bound uses a simple established proof in all other VI variants, we need properties of the upper bound to argue about the convergence of the lower bound.
This constitutes the main technical challenge.

\para{Our contribution} can be summarized as follows: 
\begin{itemize}
	\item After providing an instructive description of SVI (in particular adding some aspects lacking in the original~\cite{DBLP:conf/cav/QuatmannK18}),
	we extend SVI to SG (Sec.~\ref{sec:svi-ec-free}) and then to handle ECs (Sec.~\ref{sec:algorithm-with-ec}).
	\item Besides, (i) we apply the classical topological optimization to our approach, and we further improve it based on the performed in-depth analysis, as detailed in Sec.~\ref{sec:topological}, including its motivation and integration; (ii) our~prototypical implementation and experimental analysis illustrate that our extension preserves the main advantage of SVI (see Sec.~\ref{svi-good-example} for a concrete example)---fewer iterations are needed for systems with probabilistic cycles.
\end{itemize}
We emphasize that our contribution does \textbf{not} lie in providing a tool more efficient than other ones in the literature, but rather in the theoretically involved \textbf{proof and way how} SVI can be extended to handle ECs.
Specifically, our contribution lies in highlighting the inductive, attractor-like structure of ECs (Sec.~\ref{ssec:bes}), which provides a theoretical foundation for addressing them---the primary challenge with VI. 
%\emph{\textcolor{red}{While this work does not aim to replace BVI---which already handles ECs (i.e., sure cycles)---it provides critical insights into the inherent drawbacks of VI and highlights how the efficient handling of probabilistic cycles in SVI, which is currently missing in BVI.}}
%\emph{While this work does not aim to replace BVI, it offers critical insights into the inherent drawbacks of VI, paving the way for potential future enhancements to BVI with efficient probabilistic cycle handling of SVI, which missing in BVI. }
%\emph{While this work does not aim to replace BVI---which handles ECs (i.e., sure cycles)---it offers critical insights into the inherent drawbacks of VI, such as slow convergence with probabilistic cycles. 
\emph{Instead of aiming to replace BVI---which handles ECs (i.e., sure cycles)---this work aims to complement it by offering critical insights into the inherent drawbacks of VI, such as its slow convergence in the presence of probabilistic cycles.	
Our insights pave the way for potential future enhancements to BVI through the efficient probabilistic cycle handling introduced by SVI, which is currently missing in BVI.}
%\todo{@Jan: need to wrtie one sentence: ECs are understood in BVI.  }
This foundational step, though part of a longer-term effort, is pivotal for enabling subsequent advancements.

	%\myspace
\section{Preliminaries}
\label{sec:prelims}

We briefly recall stochastic games and introduce our notation; see e.g.~\cite[Chap. 2]{MaxiThesis} for more details.

\begin{definition}[Stochastic Game (SG)~\cite{DBLP:conf/dimacs/Condon90}]
	A (turn-based simple) stochastic game is a tuple $\G = \langle \states, \maxStates, \minStates, \act, \Av, \delta, \targets \rangle$, where $\states = \maxStates \uplus \minStates$ is a finite set of \emph{states} partitioned into \emph{Maximizer} ($\maxStates$) and \emph{Minimizer} ($\minStates$) states, %$s_0 \in \states$ is the \emph{initial} state,
	$\act$ is a finite set of \emph{actions}, the function $\Av: \states \to 2^\act \setminus \{\emptyset\} $ assigns \emph{available} actions to each state, $\delta:\states \times \act \to \Dist(\states)$ is the \emph{transition function} mapping state-action pairs to probability distributions over $\states$, and $\targets\subseteq \states$ is the \emph{target} set.
\end{definition}

An SG with $\minStates=\emptyset$ or $\maxStates=\emptyset$ is called a \emph{Markov decision process (MDP)}. 
For $s \in \states,a \in \Av(s)$, the set of \emph{successors} is $\post(s,a) := \{s' \ | \ \distribution(s,a,s') >0 \} $.
For a set of states $T \subseteq \states$, we use $T_\Box = T \cap \maxStates$ to denote all Maximizer states in $T$, and similarly $T_\circ$ for Minimizer.

The semantics are defined as usual by means of paths and strategies:
An infinite path is a sequence $s_0a_0s_1a_1\ldots \in (\states \times \act)^\omega$ where for each $i \in \mathbb{N}_0$ we have $a_i \in \Av(s_i)$ and $s_{i+1} \in \post(s_i, a_i)$. A history (also called finite path) is a finite prefix of such an infinite path, i.e.\ an element of $(\states\times\act)^* \times \states$ that is consistent with the SG. 
For a history $\rho = s_0a_0s_1a_1 \ldots s_k$, let $\rho_i\vcentcolon=s_i$ denote the $i$-th state in a path and $\abs{\rho} = k$ denote the length of the history (only counting actions; intuitively: the number of steps taken).

A strategy is a (partial) function $\pi \colon (\states\times\act)^* \times \states ~\to~ \act$ mapping histories to actions. 
A Maximizer strategy is only defined for histories that end in a Maximizer state $s\in\maxStates$, and dually for Minimizer.
We denote by $\stratsMax$ and $\stratsMin$ the sets of Maximizer and Minimizer strategies and use the convention that $\sigma\in\stratsMax$ and $\tau\in\stratsMin$.
Complementing an SG $\G$ with a pair of strategies (essentially resolving all non-determinism) induces a Markov chain $\G^{\sigma,\tau}$ with the standard probability space $\probability_{s,\G}^{\sigma,\tau}$  in state $s$, see~\cite[Sec. 2.2]{MaxiThesis}.

The \emph{reachability objective $\reach\targets$} is the set of all infinite paths containing a state in $\targets$.%
\footnote{Formally, denoting the set of all infinite paths by  $\Paths$, $\reach\targets = \{\rho\in\Paths \mid \exists i\in\Naturals. \rho_i \in \targets\}$. We also use the step-bounded variant $\reach^{\leq k}\targets = \{\rho\in\Paths \mid \exists i\in\Naturals. \rho_i \in \targets \wedge i\leq k\}$ and its dual staying objective $\stay^{\leq k}\unknown = \Paths \setminus \Big( \reach^{\leq k}(\states\setminus\unknown) \Big)$.\label{footnote:formalReachEtc}}
The \emph{value} of a state $s$ is the probability that from $s$ the target is reached under an optimal play by both players: $	\val(s) = \sup_{\sigma\in\stratsMax}\inf_{\tau\in\stratsMin} \probability_s^{\sigma,\tau} (\reach \targets) = \inf_{\tau\in\stratsMin}\sup_{\sigma\in\stratsMax} \probability_s^{\sigma,\tau} (\reach \targets)$; the order of choosing the strategies does not matter~\cite{condonComplexity}.
The function $\val: \states \to \mathbb{R}$ maps every state to its value. 
Functions $f_1, f_2: \states \to \mathbb{R}$ are compared point-wise, i.e.\ $f_1 \leq f_2$ if for all $s \in \states$, $f_1(s) \leq f_2(s)$.
In this paper, we denote $f(s, a) = \sum_{s' \in \states} \distribution(s,a,s') \cdot f(s')$ for any $f : \states \rightarrow \mathbb{R}$, using notation overloading.

\subsection{Value iteration and bounded value iteration}
\label{sec:prelims:vi}
Before we define value iteration, we recall a useful partitioning of the state space~\cite{DBLP:conf/cav/QuatmannK18}:
The set of target states $\targets$ is known; w.l.o.g., every target state is absorbing.
Similarly, simple graph search can identify the set $\sinks$ of \emph{sink states} from with no path to $\targets$.
The values of states in these sets are trivially $1$ or $0$, respectively.
Our algorithm focuses on computing the values of the remaining states in $\unknown = \states \setminus (\targets \cup \sinks)$.

Classical value iteration (VI, e.g.~\cite{visurvey}) computes the following sequence $\lb_i:\states\to\mathbb{R}$:
%\begin{eqnarray}
%	\lb_0(s) &=& \begin{cases} 
%	1 & \text{if} \ s \in \targets\\
%	0 & \text{otherwise} \end{cases}\\
%%
%		\label{eqn:vi-updates}
%		\lb_{i+1}(s) &=& \begin{cases}
%			\max_{a \in \Av(s)} \sum_{s' \in \states} \distribution(s,a,s') \cdot \lb_i(s') & \text{if} \ s \in \maxStates \\
%			\min_{a \in \Av(s)} \sum_{s' \in \states} \distribution(s,a,s') \cdot \lb_i(s') & \text{if} \ s \in \minStates
%		\end{cases}   
%	\end{eqnarray}

	\begin{align}
		\lb_0(s) &= 
		\begin{cases} 
			1, & \text{if } s \in \targets, \\
			0, & \text{otherwise}.
		\end{cases} \\
		\lb_{i+1}(s) &= 
		\begin{cases}
			\displaystyle \max_{a \in \Av(s)} \sum_{s' \in \states} \distribution(s, a, s') \cdot \lb_i(s'), & \text{if } s \in \maxStates \\
			\displaystyle \min_{a \in \Av(s)} \sum_{s' \in \states} \distribution(s, a, s') \cdot \lb_i(s'), & \text{if } s \in \minStates
		\end{cases} \label{eqn:vi-updates}
	\end{align}

These under-approximations converge (in the limit) to the value $\val$, which also is the \emph{least} fixpoint of Eq.~\eqref{eqn:vi-updates}.
While this approach is often fast in practice, it has the drawback that the imprecision $\val-\lb_i$ is unknown.
To address this, \emph{bounded value iteration} (BVI, also known as interval iteration~\cite{atva,hm18,KKKW18}) additionally computes an over-approximating sequence $\ub_i$ which allows to estimate the imprecision as $\ub_i(s)-\lb_i(s)$. 
Naively, the sequence is initialized as $\ub_0(s) = 0$ for all $s \in \sinks$ and $\ub_0(s) = 1$ for all other states, and then updated using Eq.~\eqref{eqn:vi-updates}.
However, then $\ub_i$ converges to the \emph{greatest} fixpoint of Eq.~\eqref{eqn:vi-updates} which can differ from the least fixpoint $\val$; this happens if and only if \emph{end components} are present~\cite{KKKW18}.
\begin{definition}[End component (EC)~\cite{dA97a}]
	\label{def:EC}
	A set of states $T$ with $\emptyset \neq T \subseteq \states$ is an \emph{end component} if there exists a set of actions $\emptyset \neq B \subseteq \bigcup_{s \in T}Av(s)$ such that: %\\
	1. for each $s \in T$, $a \in B \cap \Av(s)$ we have $\post(s,a) \subseteq T$, %\\
	2. for each $s, s' \in T$ there exists a path from $s$ to $s'$ using only actions in $B$.	
	An end component $T$ is a \emph{maximal end component} (MEC) if there is no other EC $T'$ such that $T \subset T'$. Further, a MEC $T$ is \emph{bottom} if for all $s \in T, a\in\act, s' \in \states \setminus T$ we have $\distribution(s,a,s')=0$.

\end{definition}

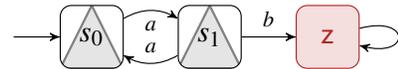
\begin{wrapfigure}{r}{0.35\textwidth}%[!h]
	%\begin{figure}[!h]%{r}{0.38\textwidth}%[!h]
	%\vspace*{-3em}
	\begin{center}
		\resizebox{0.35\textwidth}{!}{
			\begin{tikzpicture}
				\drawdummy (init) at (0,0) {};
				\node[max vertex, ] (q) at (1,0) {$s_0$};
				\node[max vertex] (r) at (2.5,0) {$s_1$};
				\node[sink] (0) at (4,0)  {};
				\draw (init) to (q);
				\draw (q) to [bend left] node[below]{\scriptsize $a$} (r);
				\draw (r) to [bend left] node[above]{\scriptsize $a$} (q);
				\draw (r) to node[above]{\scriptsize $b$} (0);
				\draw (0) to[loop right]  node [midway,anchor=west] {} (0);
			\end{tikzpicture}
		}
	\end{center}
\caption{An SG with an EC $\{s_0,s_1\}$ and a sink $\sink$.}
\label{fig:2states}
\end{wrapfigure}

As an example, consider the MDP in Fig.~\ref{fig:2states}. Here, any $\ub$ with $\ub(s_0)=\ub(s_1)$ and $\ub(\sink)=0$ is a fixpoint of Eq.~\eqref{eqn:vi-updates}.
More generally, on an EC in a maximizing MDP, one can create a fixpoint by assigning any number larger than the actual value to all states of the EC (and dually, a smaller number in a minimizing MDP).
In MDP, there is a rather simple solution: The entire EC can be  collapsed into a single state, keeping only leaving actions. This ensures a single fixpoint and, since all states in an EC have the same value as the collapsed state, it allows to obtain the values for the original MDP~\cite{atva,hm18}.
For SG, this solution is not applicable because states in an EC can have different values~\cite[Sec. 3]{KKKW18}.
The solution of~\cite{KKKW18} is to repeatedly identify subsets of ECs that possibly, given the current estimates, have the property that all states have the same value; then, the over-approximation in these subsets is conservatively lowered (\emph{``deflated''}) to the over-approximation of the best action \emph{exiting} the subset.
Relevant for our work is the fact that we need to decrease the over-approximation inside ECs to the estimate of the \emph{best exit}. Thus, we repeat here a (slightly adjusted) definition of best exit. %to return the state-action pairs , not just their value.
\begin{definition}[Best exit~\cite{KKKW18}]
	\label{def:bestExit}
	Given a set of states $T \subseteq \states$ and a function $f : \states \rightarrow \Rationals$, the best exit
	according to $f$ from $T$ is defined as:%\myspace
	\begin{align*}
		\bestExit^{\Box}_f(T) &= %\Big\{ (s,a)~|~
		\displaystyle \argmax_{\substack{(s,a) \in T_\Box\times \Av(s)\\ \post(s,a) \nsubseteq T}} \ \sum_{s' \in \states} \distribution(s,a,s') \cdot f(s') %\Big\},\\
		%\bestExit^{\Box}_f (T) &\in \mathsf{BE}^{\Box}_f(T)  \tag{picked \textbf{arbitrarily}}
	\end{align*}
\end{definition}
%Note that in contrast to \cite{KKKW18}, it is not the value rather returned but rather a set of state-action pairs, $\{(s_1,b)\}$ in our example.

\subsection{Sound value iteration in Markov chains}

\label{sec:prelims:svi}

Sound value iteration (SVI,~\cite{DBLP:conf/cav/QuatmannK18}) is an alternative to BVI that often is faster while still providing guarantees on the precision. 
It computes for increasing $k$ the probability of reaching the target set $\targets$ within $k$ steps, denoted $\probability(\reach^{\leq k} \targets)$, and the probability to stay within the unknown set $\unknown$ during the first $k$ steps, denoted $\probability(\stay^{\leq k} \unknown)$ --- see Footnote~\ref{footnote:formalReachEtc} for the formal definitions of the operators $\reach^{\leq k}$ and $\stay^{\leq k}$.
From these two sequences, it computes under- and over-approximations of the value utilizing three core ideas.
\begin{itemize}
	\item The value can be split into the probability to reach the target set \emph{within $k$ steps} plus the probability to reach the target set \emph{in strictly more than $k$ steps}~\cite[Lem. 1]{DBLP:conf/cav/QuatmannK18}.
	The former $\probability(\reach^{\leq k} \targets)$ is directly computed by the algorithm; the latter is approximated using the next two ideas.
	
	\item A path that reaches the target set in strictly more than $k$ steps first stays in $\unknown$ for $k$ steps and afterwards reaches the target. As we do not know which state in $\unknown$ is reached after the $k$ steps, we estimate its value using a lower bound $\glb_k$ or an upper bound $\gub_k$ on the value over \emph{all states in $\unknown$}.
	Formally, the probability to reach in strictly more than $k$ steps can be under-approximated as $\probability(\stay^{\leq k} \unknown)\cdot\glb_k$; dually, it can be over-approximated as $\probability(\stay^{\leq k} \unknown)\cdot\gub_k$~\cite[Prop.~1]{DBLP:conf/cav/QuatmannK18}.
	
	\item Such bounds $\glb_k$ and $\gub_k$ can be obtained from the $k$-step bounded probabilities $\probability(\reach^{\leq k} \targets)$ and  $\probability(\stay^{\leq k} \unknown)$~\cite[Prop. 2]{DBLP:conf/cav/QuatmannK18}.
	Intuitively, the highest possible value would be obtained if the state $\hat{s}\in\unknown$ with the highest value always reached itself after $k$ steps --- getting another try to obtain the highest possible value. The resulting value can be computed as $\sum_{i=0}^\infty \probability_{\hat{s}}(\stay^{\leq k} \unknown)^i\cdot\probability_{\hat{s}}(\reach^{\leq k} \targets)  $. 
	Assuming $\probability_{\hat{s}}(\stay^{\leq k} \unknown) < 1$, we can write this geometric series in closed form:

	\begin{align}
		\sum_{i=0}^\infty \probability_{\hat{s}}(\stay^{\leq k} \unknown)^i\cdot\probability_{\hat{s}}(\reach^{\leq k} \targets) 
		= \frac{\probability_{\hat{s}}(\reach^{\leq k} \targets)}{1-\probability_{\hat{s}}(\stay^{\leq k} \unknown)} \notag
		\leq \max_{s\in\unknown} \frac{\probability_{s}(\reach^{\leq k} \targets)}{1-\probability_{s}(\stay^{\leq k} \unknown)} \notag
		=\colon \gub_k
	\end{align}

	Dually, we can compute a lower bound $\glb_k$ by picking the state in $\unknown$ that minimizes the closed form of the geometric series.
	We exemplify the power choosing lower and upper bounds like this:
	Consider again the Markov chain of Fig.~\ref{fig:3states} and set $p=0.98$ and $r=0.01$. While BVI requires 682 iterations to achieve a precision of $10^{-6}$, SVI takes precisely 1 iteration.
	SVI's behaviour is in fact independent of $p$ and $r$, whereas increasing $p$ also increases the number of iterations of BVI.
\end{itemize}

We restate~\cite[Thm. 1]{DBLP:conf/cav/QuatmannK18}, which summarizes the above insights and proves the approach correct.

\begin{theorem}
	\label{thm:svi-mc} 
	Let $\mathcal{M}$ be a Markov chain with probability measure $\probability$ whose state space is partitioned into $\targets$, $\unknown$ and $\sinks$ as described above. 
	Let $k\geq0$ such that $\probability (\square^{\leq k} \unknown) < 1$ for all $s \in \unknown$. 
	Then, for every state $s\in\unknown$, its value $\val(s)=\probability_s(\reach \targets)$ satisfies the following:

\begin{align*}
\probability_s(\reach^{\leq k} \targets) &+ \probability_s
(\square^{\leq k} \unknown) \cdot \glb_k 
\leq \val(s) \leq 
 \probability_s
(\reach^{\leq k} \targets) + \probability_s (\square^{\leq k} \unknown)
\cdot \gub_k
\end{align*}
	
	where $\glb_k=\min_{s' \in \unknown} \frac{\probability_{s'}(\reach^{\leq k} \targets)}{1-\probability_{s'} (\square^{\leq k} \unknown)}$ and dually $\gub_k=\max_{s' \in \unknown} \frac{\probability_{s'}(\reach^{\leq k} \targets)}{1-\probability_{s'} (\square^{\leq k} \unknown)}$.
	
\end{theorem}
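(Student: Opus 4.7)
The plan is to follow the three-stage decomposition already sketched informally in Section~\ref{sec:prelims:svi}, turning each of the three bullet points into a formal lemma and then composing them. I will prove the upper bound explicitly; the lower bound is entirely dual, obtained by swapping $\max$ with $\min$ and the direction of every inequality.

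First, I would split the reachability probability according to whether the target is hit within $k$ steps or strictly later. Since $\square^{\leq k}\unknown$ is precisely the event ``during the first $k$ steps the path stays in $\unknown$'' (targets and sinks being absorbing), any path that reaches $\targets$ in more than $k$ steps necessarily lies in this event. By the law of total probability together with the strong Markov property applied at step $k$,
\begin{equation*}
\val(s) \;=\; \probability_s(\reach^{\leq k}\targets) \;+\; \sum_{s'\in\unknown} \probability_s\bigl(\square^{\leq k}\unknown \wedge X_k = s'\bigr)\cdot \val(s'),
\end{equation*}
where $X_k$ denotes the state after $k$ steps. Summing the probabilities over $s'\in\unknown$ yields $\probability_s(\square^{\leq k}\unknown)$, so bounding every $\val(s')$ in the sum by a common quantity $\gub_k$ (resp.\ $\glb_k$) gives exactly the desired inequality, provided such uniform bounds really exist.

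The second step is therefore to establish $\val(s')\leq \gub_k$ for every $s'\in\unknown$. Here I would fix $\hat s \in \argmax_{s'\in\unknown} \val(s')$ and apply the same decomposition at $\hat s$. Using $\val(s') \leq \val(\hat s)$ for every $s'\in\unknown$, the sum collapses to
\begin{equation*}
\val(\hat s) \;\leq\; \probability_{\hat s}(\reach^{\leq k}\targets) + \probability_{\hat s}(\square^{\leq k}\unknown)\cdot \val(\hat s).
\end{equation*}
The hypothesis $\probability_{s}(\square^{\leq k}\unknown) < 1$ for all $s\in\unknown$ is exactly what is needed to divide and solve for $\val(\hat s)$, yielding
\begin{equation*}
\val(\hat s) \;\leq\; \frac{\probability_{\hat s}(\reach^{\leq k}\targets)}{1-\probability_{\hat s}(\square^{\leq k}\unknown)} \;\leq\; \max_{s'\in\unknown} \frac{\probability_{s'}(\reach^{\leq k}\targets)}{1-\probability_{s'}(\square^{\leq k}\unknown)} \;=\; \gub_k.
\end{equation*}
By maximality of $\hat s$, this bound extends to all $s'\in\unknown$. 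Dually, taking $\check s \in \argmin_{s'\in\unknown}\val(s')$ and reversing every inequality gives $\val(s') \geq \glb_k$ for all $s'\in\unknown$.

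Finally, substituting these uniform bounds into the decomposition from the first step produces both sides of the claimed inequality. The most delicate step is the middle one: the algebraic manipulation is easy, but one must argue carefully that taking $\hat s$ to be a state achieving the maximum of $\val$ over $\unknown$ is legitimate (finiteness of $\states$) and that it is indeed $\gub_k$ --- defined via the states maximizing the \emph{ratio}, not $\val$ --- that upper-bounds the ratio at $\hat s$. The hypothesis on $\probability_s(\square^{\leq k}\unknown)$ is critical both to ensure that the division in the closed-form geometric sum is well-defined and, implicitly, to ensure that the partition $\targets,\unknown,\sinks$ is truly absorbing so that the Markov decomposition at step $k$ has the clean form used above. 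Once this is in place, the theorem follows by assembling the three lemmas in sequence.
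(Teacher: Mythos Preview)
Your proposal is correct and follows essentially the same approach the paper sketches via the three bullet points in Sec.~\ref{sec:prelims:svi} (which in turn restate \cite[Lem.~1, Prop.~1, Prop.~2]{DBLP:conf/cav/QuatmannK18}): decompose $\val(s)$ at step $k$ via the Markov property, bound the residual term uniformly by $\gub_k$ (resp.\ $\glb_k$), and obtain that uniform bound by applying the same decomposition at a value-extremal state $\hat s$ and solving the resulting inequality. The paper itself does not spell out a proof beyond this sketch, so your write-up is exactly what is intended.
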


The termination of the algorithm relies on the assumption that the system is contracting towards $\targets\cup\sinks$; intuitively, this means that there are no sure cycles in $\unknown$ and thus a path has to leave $\unknown$ almost surely.
However, in MDP and SG there can be sure cycles, namely the ECs, which also hinder the convergence of BVI.  
In Sec.~\ref{sec:algorithm-with-ec}, we provide a way to deal with the presence of ECs when using SVI; note that neither the previous solution of collapsing~\cite{atva,hm18} nor of deflating~\cite{KKKW18} suffice.

However, before we address ECs, Sec.~\ref{sec:svi-ec-free} turns to a different complication:
Our reasoning so far concerned Markov chains without non-determinism. 
When generalizing to MDP (or SG), the computations additionally depend on the choice of strategy (strategies). In particular, the probability to reach within $k$ steps depends on the chosen strategies and need not be the optimal $k$-step value; moreover, the bounds cannot be computed using only the closed form of the geometric series.

	\section{Sound value iteration for stochastic games without end components}\label{sec:svi-ec-free}

This section extends SVI from MDP to SG without ECs. 
We build on previous work~\cite{DBLP:conf/cav/QuatmannK18}, performing necessary dual operations for adding a player.
Further, this section prepares the reader for the more technical discussion in Sec.~\ref{sec:algorithm-with-ec} as it introduces the basic algorithm, revisits SVI concepts with additional examples, and improves notation.
Also, we complete the original correctness proof, see Rem. \ref{rem:missingInQK18}.

In order to lift Thm.~\ref{thm:svi-mc} from Markov chains to SG, Sec.~\ref{sec:svi-strats} and~\ref{sec:no-ec-bounds} answer the following two questions, respectively:
How do we pick the strategies $\sigma_k$ and $\tau_k$ that Maximizer and Minimizer use in the $k$-th iteration? 
How do we pick the lower bound $\glb_k$ and the upper bound $\gub_k$ on the value of all states in $\unknown$?
Finally, Sec.~\ref{sec:algorithm-no-ec} combines the insights into the full algorithm.

\subsection{Computing the strategies}
\label{sec:svi-strats}
We following the idea of~\cite[Sec. 4.1]{DBLP:conf/cav/QuatmannK18}:
For an optimal Maximizer strategy $\sigma^{\text{max}}$ and for all Minimizer strategies $\arbtau$ and bounds $\gub_k \geq \max_{s\in\unknown} \probability_s^{\sigma^{\max},\arbtau}(\reach \targets)$, the following chain of inequations holds:

\begin{align}%\hspace*{-7mm}
	\val = \sup_{\sigma\in\stratsMax}\inf_{\tau\in\stratsMin} \probability^{\sigma,\tau}(\reach \targets) &=\inf_{\tau\in\stratsMin} \probability^{\sigma^{\text{max}},\tau}(\reach \targets)\notag 
	\leq \probability^{\sigma^{\text{max}},\arbtau}(\reach \targets) 
	\nonumber \\
	&\leq 
	\probability^{\sigma^{\text{max}},\arbtau}(\reach^{\leq k} \targets) + \probability^{\sigma^{\text{max}},\arbtau}(\stay^{\leq k} \unknown)\cdot \gub_k
	\nonumber \\
	&\leq \sup_\sigma \left(\probability^{\sigma,\arbtau}(\reach^{\leq k} \targets) + \probability^{\sigma,
		\arbtau}(\stay^{\leq k} \unknown)\cdot \gub_k\right)
	\label{eq:maxCorr}\hspace*{-4mm}
\end{align}
We can use the dual argument for a lower bound $\glb_k$, an arbitrary Maximizer strategy $\arbsigma$ and an optimal Minimizer strategy $\tau^{\text{min}}$ to obtain:
\begin{align}
	\label{eq:minCorr}
	\val \geq \inf_\tau \left(\probability^{\arbsigma,\tau}(\reach^{\leq k} \targets) + \probability^{\arbsigma,\tau}(\stay^{\leq k} \unknown)\cdot \glb_k\right)
\end{align}

Based on these inequalities, we can inductively construct strategies $\sigma_k,\tau_k$.%
\newtarget{def:opt}
In order to highlight the symmetry of the players, our formal definition uses a strategy $\pi$ of either Maximizer or Minimizer.
There are two differences between the players: the operator $\opt$ is $\max$ if $\pi\in\stratsMax$ and dually $\min$ if $\pi\in\stratsMin$ (with $\overline{\opt}=\min$ if $\opt=\max$ and vice versa) and the bound $\mathsf{b}_k$ is either $\gub_k$ or $\glb_k$, respectively.
Importantly, the strategies do not optimize ($k$-step) reachability, but the $k$-step objective $\probability^{\sigma_k,\tau_k}(\reach^{\leq k} \targets) + \probability^{\sigma_k,\tau_k}(\stay^{\leq k} \unknown)\cdot \mathsf b_k$, as suggested by Eqs.~\eqref{eq:maxCorr} and~\eqref{eq:minCorr}.
We construct the strategies recursively, building on the correctness of the strategies with shorter horizon. To achieve this, the $(k+1)$st strategy makes one optimal choice in the first step and then mimics the $k$th strategy, i.e.\ for a path $sa\rho$ starting with state $s$ and chosen action $a$, afterwards we have $\pi_{k+1}((sa)\rho)=\pi_k(\rho)$.
As base case, when $k=0$ we choose an arbitrary action, because all actions are 0-step optimal.
Formally,

%\begin{equation}
%		\label{eq:strats}
%	\boxed{
%		\begin{aligned}
%					&\textsc{``$k$-step'' Strategies:}\\
%			&\pi_0(\rho) \in \Av(\rho_0) \\	
%			&\pi_{k+1}(\rho) \gets
%			\begin{cases}
%				\displaystyle
%				\in \arg \opt_{a\in\Av(s)} \sum_{s' \in S} \distribution(s, a, s') \cdot \Big(\probability^{\sigma_k, \tau_k}_{s'}  (\reach^{\leq k}\targets) + \probability^{\sigma_k, \tau_k}_{s'}  (\stay^{\leq k}\unknown) \cdot \mathsf b_k \Big) &\mbox{if } \rho= s\in\states \\
%				%\pi_k(\rho[1..k+1]) &\mbox{if } \abs{\rho}>0
%				\pi_k(\rho') &\mbox{if } {\rho}=sa\rho'
%			\end{cases}
%		\end{aligned}
%	}
%\end{equation}
%\vspace{-0.1in}
{\small
	\begin{equation}
	\label{eq:strats}
	\boxed{
		\begin{aligned}
			&\textsc{``$k$-step'' Strategies:} \\
			&\pi_0(\rho) \in \Av(\rho_0)\\
			&\pi_{k+1}(\rho) \gets 
		\begin{cases}
				\begin{aligned}
					\displaystyle\in \arg\opt_{a \in \Av(s)}  \sum_{s' \in S} \distribution(s, a, s') \cdot \Big( 
					\probability^{\sigma_k, \tau_k}_{s'}(\reach^{\leq k}\targets) + 
					\probability^{\sigma_k, \tau_k}_{s'}(\stay^{\leq k}\unknown) \cdot \mathsf{b}_k \Big),
				\end{aligned}
				& \mbox{if } \rho = s \in \states \\
				\pi_k(\rho'), & \mbox{if } \rho = sa\rho'
			\end{cases}
		\end{aligned}
	}
	\end{equation}
}

%\[
%\boxed{
%	\begin{aligned}
%		&\textsc{\enquote{$k$-step} Strategies:} \\
%		&\pi_0(\rho) \in \Av(\rho_0) \\
%		&\pi_{k+1}(\rho) \gets
%		\begin{cases}
%			\begin{aligned}
%				\displaystyle\arg\opt_{a \in \Av(s)} \sum_{s' \in S} \distribution(s, a, s') \cdot \Big(
%				\probability^{\sigma_k, \tau_k}_{s'}(\reach^{\leq k}\targets) + 
%				\probability^{\sigma_k, \tau_k}_{s'}(\stay^{\leq k}\unknown) \cdot \mathsf{b}_k
%				\Big), 
%			\end{aligned}
%			& \text{if } \rho = s \in \states \\
%			\pi_k(\rho'), & \text{if } \rho = sa\rho'
%		\end{cases}
%	\end{aligned}
%}
%\]

%{\small
%\begin{equation}
%	\label{eq:strats}
%	\boxed{
%		\begin{aligned}
%			&\textsc{``$k$-step'' Strategies:} \\
%			&\pi_0(\rho) \in \Av(\rho_0), \\
%			&\pi_{k+1}(\rho) \gets 
%			\begin{cases}
%				\displaystyle
%				\in \arg \opt_{a \in \Av(s)} \sum_{s' \in S} \distribution(s, a, s') \cdot 
%				\Big( \probability^{\sigma_k, \tau_k}_{s'}(\reach^{\leq k}\targets) \\
%				\quad ~~~~~~+ \probability^{\sigma_k, \tau_k}_{s'}(\stay^{\leq k}\unknown) \cdot \mathsf{b}_k \Big), 
%				& \hspace{-0.4in}\text{if } \rho = s \in \states \\[2em]
%				\pi_k(\rho'), 
%				& \hspace{-0.4in} \text{if } \rho = sa\rho'
%			\end{cases}
%		\end{aligned}
%	}
%\end{equation}
%}

For histories longer than $k+1$, the definition is irrelevant and can be arbitrary as we shall only consider $(k+1)$-step optimality.
These strategies yield correct over-/under-approximations in the sense of Thm.~\ref{thm:svi-sg-ecfree} below, lifting Thm.~\ref{thm:svi-mc}. 
In the extended version~\cite{svi-sg-extended}, we state a simplified version of this result with constant bounds $\glb,\gub$, as opposed to the generally variable bounds of Thm.~\ref{thm:svi-sg-ecfree}
% Formally, we state a simplified version of this result as Lem.~\ref{lem:svi-sg-ecfree} in App.~\ref{app:proofs_sec3}, with constant bounds $\glb,\gub$, as opposed to the generally variable bounds of Thm.~\ref{thm:svi-sg-ecfree}.

For readability, we omit the specific strategies $\sigma_k,\tau_k$ and the objectives when possible:%
\newtarget{def:reach-stay} 
Assuming the players use $\sigma_k$ and $\tau_k$ as defined in Eq. \eqref{eq:strats}, we denote the probability to reach $\targets$ or stay in $\unknown$ within $k$ steps from $s$ by $\svireach_s^k \vcentcolon= \probability_s^{\sigma_{k},\tau_{k}} (\reach^{\leq k} \targets)$ and $\svistay_s^k \vcentcolon= 	\probability_s^{\sigma_{k}, \tau_{k}} (\square^{\leq k} \unknown)$, respectively.

\myspaceb
\subsection{Computing bounds}
\label{sec:no-ec-bounds}

Having answered the first question---how to compute strategies---we turn to the second question: obtaining good bounds $\glb_k$, $\gub_k$.
While trivial bounds $\glb=0$ and $\gub=1$ suffice for a correct and convergent algorithm, using good bounds is the key to the good practical performance of SVI. %, as exemplified on Fig.~\ref{fig:3states}.
%They enable a practically tight extrapolation from the step-bounded information accumulated so far.
A natural choice is $\glb_k = \min_{s \in \unknown} \frac{\svireach_s^k}{1-\svistay_s^k}$ (dually for $\gub_k$).
However, in the presence of non-determinism, this is incorrect~\cite[Sec. 4]{DBLP:conf/cav/QuatmannK18}.
Instead, the updates have to proceed more carefully, not overshooting the interval of values for which the currently considered action is relevant.
This safety cap is called the \emph{decision value}.
We refer the reader to the extended version~\cite{svi-sg-extended} for an extensive derivation and explanation.%, in particular Ex.~\ref{ex:dVal} for the intuition.
Apart from dualization and adjusted notation, our definition is as in~\cite{DBLP:conf/cav/QuatmannK18}.
Denote the difference in staying probabilities of actions $\alpha$ and $\beta$ by $\deltastay(\alpha,\beta) \eqdef \sum_{s' \in \states} (\trans(s,\alpha,s') - \trans(s,\beta,s'))\cdot\svistay^k_{s'}$.
\begin{equation*}
	\label{eq:def-dval-sa}
	\boxed
	{
		\begin{aligned}
			&\decval_{\mathsf{b}}^k(s,\alpha)\eqdef  \opt_{\beta\in\Av(s). \deltastay(\alpha,\beta) > 0} ~ \frac
			{\sum_{s' \in \states} (\trans(s,\beta,s') - \trans(s,\alpha,s'))\cdot\svireach^k_{s'}}
			{\sum_{s' \in \states} (\trans(s,\alpha,s') - \trans(s,\beta,s'))\cdot\svistay^k_{s'}}	\\
			&\displaystyle
			\decval_{\mathsf{b}}^k \eqdef \opt (\decval_{\mathsf{b}}^{k-1}, \opt_{s \in \unknown} \decval_{\mathsf{b}}^k(s,\alpha)) 
		\end{aligned}
	}
\end{equation*}
where the second line sets the global decision value as the optimum over all states, reflecting also previous decision values.
Finally, we can pick the best possible bound using the geometric series, but capping it at the decision value. The outer $\overline{\opt}$ ensures monotonicity of the bounds:%\myspaceb
\begin{equation}
	\label{eq:def-bounds}
	\boxed{
		\begin{aligned}
			&\textsc{(Dynamic) Bounds:}\\
			%\begin{align}
			&\mathsf b_k \eqdef  \overline{\opt} \left(\mathsf b_{k-1}, \opt~(\opt_{s \in \unknown}~ \frac{\svireach_s^k}{1-\svistay_s^k}, \decval_{\mathsf{b}}^k) \right)
			%\end{align}
		\end{aligned}
	}
\end{equation}

The condition \textsc{`updateGlobalBounds?'} ensures that we only update the global bounds when all the states in $\unknown$ have $\svistay$ strictly less than 1, as otherwise $\frac{\svireach_s^k}{1-\svistay_s^k}$ would be undefined.
\begin{equation}
	\boxed{
		\begin{aligned}
			\label{eqn:updateGlobalBounds-no-ec}
			\textsc{updateGlobalBounds}?:= \forall s \in \unknown. \svistay^k_s< 1 
		\end{aligned}
	}
\end{equation}

Thm.~\ref{thm:svi-sg-ecfree} states the correctness of the approximations computed based on the \enquote{$k$-step} strategies from Eq.~\eqref{eq:strats} and dynamically changing bounds from Eq.~\eqref{eq:def-bounds}.

\begin{restatable}{theorem}{theoremsvisgecfree}
		\label{thm:svi-sg-ecfree} 
	
	Fix an SG $\G$ with its state space partitioned in $\targets$, $\sinks$ and $\unknown$. Let $\glb_k, \gub_k \in \Reals$ be bounds as defined by Eq.~\eqref{eq:def-bounds}.
	For $\sigma_k \in  \stratsMax, \tau_k \in \stratsMin$ defined by Eq.~\eqref{eq:strats}, 
	the following inequality holds:	
	
%\begin{align*}
%	\probability_s^{\sigma_k, \tau_k} (\reach^{\leq k} \targets) &+
%	\probability_s^{\sigma_k, \tau_k} (\square^{\leq k} \unknown) \cdot \glb_k \\
%	\quad \leq &\val(s) \leq \\
%	\probability_s^{\sigma_k, \tau_k} (\reach^{\leq k} \targets) &+
%	\probability_s^{\sigma_k, \tau_k} (\square^{\leq k} \unknown) \cdot \gub_k
%\end{align*}

	\begin{equation*}\probability_s^{\sigma_k,\tau_k} (\reach^{\leq k} \targets) +
		\probability_s^{\sigma_{k},\tau_{k}} (\square^{\leq k} \unknown) \cdot \glb_k
		\leq
		\val(s)
		\leq
		\probability_s^{\sigma_{k},\tau_{k}} (\reach^{\leq k} \targets) +
		\probability_s^{\sigma_{k}, \tau_{k}} (\square^{\leq k} \unknown) \cdot \gub_k
	\end{equation*}
\end{restatable}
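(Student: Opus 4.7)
The plan is to prove Theorem~\ref{thm:svi-sg-ecfree} by induction on $k$, extending the Markov-chain argument of Theorem~\ref{thm:svi-mc} to two players. I would strengthen the statement into a simultaneous two-part invariant: (i) the displayed inequality holds at step $k$ for the strategies $\sigma_k, \tau_k$ of Eq.~\eqref{eq:strats}, and (ii) the dynamic bounds satisfy the global sandwich $\glb_k \leq \min_{s\in\unknown}\val(s)$ and $\gub_k \geq \max_{s\in\unknown}\val(s)$. Part (ii) is what makes (i) self-sustaining, and coupling them is the natural way to side-step that SVI's recurrence is not a Bellman operator on $\val$ itself.

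For $k=0$, using the initial bounds $\glb_0 = 0$ and $\gub_0 = 1$, part (ii) is trivial and (i) reduces to a case split: for $s \in \targets$ both sides equal $1$, for $s \in \sinks$ both equal $0$, and for $s \in \unknown$ the inequality collapses to $0 \leq \val(s) \leq 1$. For the inductive step from $k-1$ to $k$, I would use part (ii) of the IH together with Eq.~\eqref{eq:maxCorr}, instantiated at $\arbtau := \tau_k$ and with $\gub_k$ (which satisfies the required dominance by (ii) plus monotonicity from the outer $\overline{\opt}$ in Eq.~\eqref{eq:def-bounds}), to bound $\val(s)$ above by $\sup_\sigma\bigl(\probability^{\sigma,\tau_k}_s(\reach^{\leq k}\targets) + \probability^{\sigma,\tau_k}_s(\stay^{\leq k}\unknown)\cdot \gub_k\bigr)$. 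Unrolling the first step and applying the IH to the suffix strategies $\sigma_{k-1}, \tau_{k-1}$, the recursive definition of $\sigma_k$ in Eq.~\eqref{eq:strats} is designed exactly to realize this supremum; the dual argument through Eq.~\eqref{eq:minCorr} closes the lower side, establishing (i).

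The hard part will be preserving (ii) under the update of Eq.~\eqref{eq:def-bounds}. In a Markov chain this is automatic because the closed-form $\max_{s\in\unknown} \svireach^k_s / (1 - \svistay^k_s)$ is always a valid upper bound via the geometric-series argument recalled in Sec.~\ref{sec:prelims:svi}. In an SG the action attaining the maximum in Eq.~\eqref{eq:strats} can change as $k$ grows, and a naive update risks dropping $\gub_k$ below the value of an action that is not yet provably suboptimal. The \emph{decision value} $\decval^k_{\gub}$ is precisely the threshold above which the current greedy argmax is guaranteed to remain optimal, and the plan is to show by a case analysis on whether $\decval^k_{\gub}$ or the closed-form expression dominates in Eq.~\eqref{eq:def-bounds} that each possible value of $\gub_k$ individually upper-bounds $\val$ on $\unknown$. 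The absence of ECs is used here to ensure that the \textsc{updateGlobalBounds?} condition of Eq.~\eqref{eqn:updateGlobalBounds-no-ec} is eventually met, so the new closed-form quantity is well-defined.

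Finally, I would combine (i) and (ii) at step $k$ to recover the statement of Theorem~\ref{thm:svi-sg-ecfree}. The non-trivial content relative to~\cite{DBLP:conf/cav/QuatmannK18} lies in dualizing everything for Minimizer, in verifying that the memory-dependent strategies of Eq.~\eqref{eq:strats} interact correctly on both sides when computing $\probability^{\sigma_k,\tau_k}$ (this is what Remark~\ref{rem:missingInQK18} alludes to), and in the decision-value case analysis just described; the remaining algebra is routine but bookkeeping-heavy.
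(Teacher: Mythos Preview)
Your overall skeleton matches the paper's: both need the sandwich $\glb_k \leq \val \leq \gub_k$ on $\unknown$, both instantiate Ineq.~\eqref{eq:maxCorr} at $\arbtau = \tau_k$, and both must then argue that $\sigma_k$ attains the resulting supremum. The gap is in this last step. You claim that after unrolling one step and applying the IH to $\sigma_{k-1},\tau_{k-1}$, ``the recursive definition of $\sigma_k$ \ldots\ is designed exactly to realize this supremum,'' but this does not follow. The first action of $\sigma_k$ is an $\argmax$ computed against the bound $\gub_{k-1}$; the suffix $\sigma_{k-1}$ was in turn built against $\gub_{k-2}$, and so on---yet the supremum you need in Ineq.~\eqref{eq:maxCorr} is evaluated against the \emph{current}, smaller bound $\gub_k$. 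Because the maximizing action in $a \mapsto \sum_{s'}\trans(s,a,s')\bigl(\svireach^{k-1}_{s'} + \svistay^{k-1}_{s'}\cdot b\bigr)$ can change as $b$ decreases, optimality for $\gub_{k-1}$ does not transfer to $\gub_k$ for free, and your IH~(i) says nothing about $\sigma_{k-1}$ being optimal for any modified objective.

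This is precisely the step Remark~\ref{rem:missingInQK18} flags as missing from~\cite{DBLP:conf/cav/QuatmannK18}, and it is where the paper spends its two pages: the cap $\gub_k \geq \decval^k_\gub$ built into Eq.~\eqref{eq:def-bounds}, together with the accumulation $\decval^k_\gub = \max(\decval^{k-1}_\gub,\ldots)$, is exactly what certifies that every $\argmax$ recorded at steps $1,\ldots,k$ remains an $\argmax$ at the threshold $\gub_k$. You locate the decision-value case analysis in preserving your invariant~(ii), but its primary role---and the content of the induction the paper's sketch refers to---is in establishing that $\sigma_k$ is among the optimal strategies for the modified $k$-step objective. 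Once you move that argument into your step~(i), the proof coincides with the paper's.
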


\begin{proof}[Proof sketch]
	First, we fix an arbitrary $k\in\Naturals$ and prove 
    %using the technical Lem.~\ref{lem:global-l-u}  that $\val(s)\leq
    that $\val(s)\leq
    \gub_k$. Second, we instantiate the arbitrary Minimizer strategy with $\tau_k$ and from Ineq.~\eqref{eq:maxCorr} obtain the following:
	\begin{align}
		\val(s) \leq \sup_{\sigma\in\stratsMax} \left( \probability_{s}^{\sigma,\tau_k} (\reach^{\leq k} \targets) +
		\probability_{s}^{\sigma,\tau_k} (\square^{\leq k} \unknown) \cdot \gub_k \right)
	\end{align}
	Finally, we use induction and an argument about the decision value to prove that $\sigma_k$ is among the optimal strategies for the modified objective.
\end{proof}	
	\begin{remark}\label{rem:missingInQK18}
		The final statement ($\sigma_k$ being an optimal strategy) was not proven in the original paper~\cite{DBLP:conf/cav/QuatmannK18}.
		It only stated its necessity and described how to construct $\sigma_k$.
		%We provide the (2-pages long) proof in %~\ref{app:no-ec-bounds}, 
		We provide the (2-page long) proof in the extended version~\cite{svi-sg-extended}, %~\ref{app:no-ec-bounds},
		including the induction that proves the optimality of $\sigma_k$ and utilizes the decision value.
	\end{remark} 
\vspace{-0.1in}
\subsection{Algorithm}
\label{sec:algorithm-no-ec}
\vspace{-0.05in}

We summarize the SVI algorithm for SG without ECs in Alg.~\ref{alg:svi-sg-high-level}. We also provide an equivalent,
fully formal description of the algorithm in the extended version~\cite{svi-sg-extended}.% App.~\ref{app:algorithms}.
%\onecolumn
\begin{algorithm*}[h]
	\caption{Sound value iteration for stochastic games without end components}
	\label{alg:svi-sg-high-level}
	\begin{algorithmic}[1]
		\Require SG $\G$ (partitioned into $\unknown,\targets,\sinks$) and precision $\varepsilon>0$
		\Ensure $\val' \colon \states \to \Reals$ such that $\abs{\val'(s)-\val(s)} \leq \varepsilon$ for all $s \in \states$

		\State For $s\in \targets$ and all $k\in\Naturals_0$, $\svireach_s^k \gets 1$ and $\svistay_s^k \gets 0$  \label{line:svi:initStart} \Comment{Intialization}
		\State For $s\in \sinks$ and all $k\in\Naturals_0$, $\svireach_s^k \gets 0$ and $\svistay_s^k \gets 0$
		\State For $s\in \unknown$, $\svireach_s^0 \gets 0$ and $\svistay_s^0 \gets 1$ 
		\State $\glb_0\gets0$ and $\gub_0\gets1$ \label{line:svi:initEnd}
		\State $k\gets 0$
		
		\smallskip
		
		\Repeat
		\State Use \textsc{``$k$-step'' Strategies} to choose $\sigma_{k+1}$ and $\tau_{k+1}$ -- see Eq.~\eqref{eq:strats} \label{line:svi:chooseStrats}
		\For{all $s\in\unknown$}  \Comment{$(k+1)$-step bounded probabilities}
		\State Compute $\svireach_s^{k+1}$ and $\svistay_s^{k+1}$ by \textsc{Bellman-Update} %on $\svireach_s^{k}$ and $\svistay_s^{k}$) 
		-- see Eq.~\eqref{alg:update-no-ec}
		\label{line:svi:updateStepBoundProbs}
		\EndFor
		\If{\textsc{updateGlobalBounds}? --  see condition~\eqref{eqn:updateGlobalBounds-no-ec}}  \Comment{Update bounds}
		\State Compute $\glb_{k+1}$ and $\gub_{k+1}$ by \textsc{(Dynamic) Bounds} -- see Eq.~\eqref{eq:def-bounds}
		\EndIf
		\label{line:svi:updateGlobalBounds-l-u}
		\State $k \gets k+1$
		\Until{$\svistay^k_s \cdot (\gub_k-\glb_k) < 2\cdot\varepsilon$ for all $s\in\unknown$} \Comment{Termination condition} 
		\label{line:svi:terminationCond}
		\State	\Return $\svireach^k_s + \svistay^k_s \cdot \frac{\glb_k+\gub_k}{2}$ for all $s \in \unknown$
		
	\end{algorithmic}
\end{algorithm*}
%\twocolumn
The algorithm first initializes the 0-step bounded probabilities and upper and lower bounds using the trivial under- and over-approximations of 0 and 1 (Lines~\ref{line:svi:initStart}-\ref{line:svi:initEnd}).
For targets and sinks, this trivial initialization is done for all step bounds $k$.
Afterwards, the main loop of the algorithm repeats the following steps:
First, as discussed in Sec.~\ref{sec:svi-strats}, Line~\ref{line:svi:chooseStrats} chooses the strategies optimizing the $k$-step objective $\probability^{\sigma_k,\tau_k}(\reach^{\leq k} \targets) + \probability^{\sigma_k,\tau_k}(\stay^{\leq k} \unknown)\cdot \mathsf b_k$. 
This only relies on $k$-step values which have been computed in the previous iteration.
Second, Line~\ref{line:svi:updateStepBoundProbs} computes the $(k+1)$-step bounded probabilities for all the states according to our strategies, again building on the $k$-step values:

\begin{equation}
	\label{alg:update-no-ec}
	\boxed{
		\begin{aligned}
			&\textsc{Bellman-Update:}\\
			&\svireach^{k+1}_s \gets \sum_{s' \in \states}\distribution(s,\pi_{k+1}(s), s') \cdot \svireach^{k}_{s'};~~~~\svistay^{k+1}_{s} \gets \sum_{s' \in \states}\distribution(s,\pi_{k+1}(s), s') \cdot \svistay^{k}_{s'}
		\end{aligned}
	}
\end{equation}

Finally, if \textsc{updateGlobalBounds}? evaluates to true, Line~\ref{line:svi:updateGlobalBounds-l-u} updates the global bounds using Eq.~\eqref{eq:def-bounds}. 
The loop is repeated until the approximations of lower and upper bounds are $\varepsilon$-close (Line~\ref{line:svi:terminationCond}). 
This happens when $\svistay$ comes close to 0 or the global lower and upper bound become close.
One can also use relative difference by replacing $\svistay^k_{s}$ by $\frac{\svistay^k_{s}}{\svireach^k_{s}+\svistay^k_{s} \cdot \gub}$. % which could be useful in various applications. 
We prove in the extended version~\cite{svi-sg-extended}:%App.~\ref{app:proofs_sec3}:

\begin{restatable}{theorem}{theoremsvinoec}
	\label{thm:svi:no-ec}
Fix an SG $\G$ without ECs in $\unknown$, and fix a precision $\varepsilon>0$. 
Alg.~\ref{alg:svi-sg-high-level} terminates returning $\val'$, an $\varepsilon$-approximation of the value, i.e. for all $s \in \unknown, |\val'(s)- \val(s)| \leq \varepsilon$.
\end{restatable}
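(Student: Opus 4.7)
The plan is to decompose the theorem into two claims: (i) \emph{soundness}, that whenever the loop exits the returned value is $\varepsilon$-close to $\val$; and (ii) \emph{termination}, that the loop exits after finitely many iterations.

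For soundness, I would invoke Thm.~\ref{thm:svi-sg-ecfree} at the final iteration $k$: it guarantees
\begin{equation*}
\svireach^k_s + \svistay^k_s \cdot \glb_k \ \leq\ \val(s)\ \leq\ \svireach^k_s + \svistay^k_s \cdot \gub_k
\end{equation*}
for every $s\in\unknown$. The returned value $\val'(s) = \svireach^k_s + \svistay^k_s \cdot \tfrac{\glb_k+\gub_k}{2}$ is the midpoint of this interval whose width is exactly $\svistay^k_s \cdot (\gub_k-\glb_k)$; the loop-exit condition forces that width below $2\varepsilon$, so $|\val'(s)-\val(s)| < \varepsilon$.

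For termination I would show that eventually $\svistay^k_s\cdot(\gub_k-\glb_k)<2\varepsilon$ for every $s\in\unknown$. Since $\gub_k-\glb_k \leq 1$ uniformly, it suffices to argue $\svistay^k_s \to 0$. Here the EC-free hypothesis enters: because $\unknown$ contains no EC, for any state $s\in\unknown$ and any pair of strategies the probability of remaining inside $\unknown$ is guaranteed to decay. Concretely, I would leverage the finiteness of $\states$ and $\act$: over the finitely many \emph{memoryless} strategy pairs, each induces a (finite) Markov chain in which $\unknown$ is transient, so there exists a uniform constant $c<1$ and a horizon $N\leq|\unknown|$ such that from every $s\in\unknown$ the probability of still being in $\unknown$ after $N$ steps is at most $c$. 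A standard Markov-style argument extends this uniform bound to arbitrary history-dependent strategy pairs, yielding $\svistay^k_s \leq c^{\lfloor k/N\rfloor}\to 0$. Consequently the termination condition is met after finitely many iterations.

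The main obstacle I anticipate is that, unlike in the classical setting, the strategies $\sigma_k,\tau_k$ used to compute $\svistay^k_s$ change from iteration to iteration and, per Eq.~\eqref{eq:strats}, are history-dependent and tailored to a moving target (the step-bounded SVI objective, not plain reachability). Hence one cannot directly appeal to a single contracting Markov chain. The fix is to derive the uniform exponential decay purely structurally from the absence of ECs in $\unknown$, as sketched above, which then dominates $\svistay^k_s$ regardless of which strategy the algorithm selects at iteration $k$. A minor auxiliary observation is that Eq.~\eqref{eq:def-bounds} keeps $\glb_k,\gub_k\in[0,1]$ monotonically by the outer $\overline{\opt}$-cap, so the factor $(\gub_k-\glb_k)$ cannot grow and blow up the product.
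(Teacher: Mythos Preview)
The paper defers the full proof of this theorem to the extended version, so a direct line-by-line comparison is not possible from the text provided. Your two-part decomposition---soundness via Thm.~\ref{thm:svi-sg-ecfree} together with the midpoint argument, and termination via exponential decay of $\svistay^k_s$ from the EC-free hypothesis---is correct and is the expected line of reasoning; the soundness half is essentially immediate, and your identification of the main obstacle (the moving, history-dependent strategies $\sigma_k,\tau_k$) is on point.

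The one place worth tightening is your extension of the uniform contraction from memoryless to history-dependent strategy pairs. Rather than enumerating memoryless pairs and then invoking a ``standard Markov-style argument'', it is cleaner to work directly with $M_k(s)\eqdef\sup_{\sigma,\tau}\probability^{\sigma,\tau}_s(\square^{\leq k}\unknown)$, which obeys the one-player Bellman recursion $M_{k+1}(s)=\max_{a\in\Av(s)}\sum_{s'}\delta(s,a,s')\,M_k(s')$ (both players cooperate to stay). The no-EC assumption forces the limit of this recursion to be $0$ on $\unknown$; finiteness then gives an $N$ with $c\eqdef\max_{s}M_N(s)<1$, and the telescoping bound $\max_s M_{k+N}(s)\le c\cdot\max_s M_k(s)$ yields the geometric decay uniformly over \emph{all} strategy pairs. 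Since $\svistay^k_s\le M_k(s)$ regardless of which $(\sigma_k,\tau_k)$ the algorithm selects, termination follows. This avoids the subtle gap that, for a fixed finite horizon $N$, optimal staying strategies are step-counting (Markovian) rather than memoryless, so ``max over memoryless pairs'' is not a priori the right supremum.
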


	\section{Extending sound value iteration to systems with end components}
\label{sec:algorithm-with-ec}
This section first demonstrates that  Alg.~\ref{alg:svi-sg-high-level} does not converge in the presence of ECs. 
Second, we explain that natural extensions of the \emph{deflate} solution for BVI~\cite{KKKW18} do not resolve the problem for SVI.
Third, we introduce new notions of \emph{best-exit set (BES)} and \emph{delay} action, based on which we design a valid solution.
\newpage

%\vspace{-0.1in}
\subsection{Approximations do not converge in presence of ECs} 
\label{sec:non-convergence}

\tikzset{node distance=2.5cm, %Minimum distance between nodes
	every state/.style={minimum size=15pt, fill=white, circle, 
		align=center, draw}, % Added 'draw' to make the border visible
	chance state/.style={minimum size=3pt, inner sep=0pt, fill=black, circle, 
		align=center, draw},
	max state/.style={minimum size=20pt, fill=white, rectangle, 
		align=center, draw},
	every picture/.style={-stealth},
	brace/.style={decorate,decoration=brace}, semithick}

\begin{wrapfigure}{r}{0.35\textwidth}
	%\begin{figure}[b]
	\centering
	\vspace*{-3em}
	\resizebox{0.35\textwidth}{!}{
		\begin{tikzpicture}
			\node[max vertex, ] (s) {$s$};
			\node[chance state] at (1,0)(c){};
			\node[goal] at (2.5,0.5) (f) {};
			\node[sink] at (2.5,-0.5) (z) {};
			
			\draw (f) to [out=100,in=60,loop,looseness=5] (f);
			\draw (z) to [out=290,in=250,loop,looseness=5] (z);
			\draw (s) to [out=150,in=210,loop,looseness=5] node[right] {$a$}  (s);
			\path[actionedge]
			(s) edge node[action,above] {$b$} (c);
			\path (c) edge node[above] {\scriptsize{$\nicefrac12$}} (f);
			\path (c) edge node[below] {\scriptsize{$\nicefrac12$}}(z);
		\end{tikzpicture}
	}
	\vspace*{-2em}
	\caption{A game with trivial Maximizer EC}
	\label{fig:example1}
	%\vspace*{-1em}
	%\end{figure}
\end{wrapfigure}
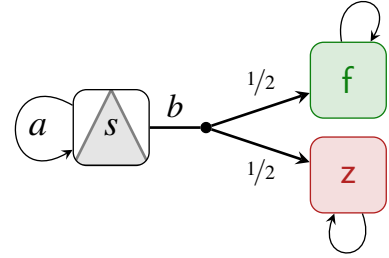
The termination proof of Thm.~\ref{thm:svi:no-ec} crucially relies on the absence of ECs. 
Even the simple EC in Fig.~\ref{fig:example1} leads to non-termination of Alg~\ref{alg:svi-sg-high-level}: 
After initialization, we have $\glb=0$, $\gub=1$, $\svireach^k_{s}=0$, %$\probability_s^{\sigma,\tau} (\reach^{\leq k} \targets)=0$ 
and 
$\svistay^k_{s}=1$.
%$\probability_s^{\sigma, \tau} (\square^{\leq k} \unknown)=1$.
For $k\geq1$, the algorithm picks $\sigma(s):=a$, resulting in no change to the approximations.
%This is true for all the subsequent iterations ($k=2,3,$...) as well. 
In particular, the over-approximation is always $1$ and does not converge to the true $\nicefrac12$.
The key element of any solution to the EC issue is to identify where to best \emph{exit} the EC, which provides realistic information on the value compared to non-informative cyclic dependencies within the EC.

\subsection{A sequence of difficulties: Why the natural solutions do not work}
\para{Root of the difficulties: Incompatibility of SVI and deflating BVI.} ECs were previously handled for \emph{BVI} using so-called \emph{deflating}~\cite{KKKW18}.
This approach finds so-called \emph{simple end components (SEC)}, which are ECs where all states having the same value.
In each SEC, deflating reduces the individual over-approximations for each state in the SEC to the value of Maximizer's \emph{best exit} from the SEC (see Def.~\ref{def:bestExit}), since none of these states can possibly get any higher value. 
In contrast, the difficulty with \emph{SVI} is that it does not maintain explicit local over-approximations for each state which could be reduced (as used in deflating for BVI).
Instead, it computes over-approximations from the $k$-step values $\svireach$, $\svistay$ and a global bound $\gub$.
Consequently, combining SVI with deflating is surprisingly involved.  Since SVI requires expressing the desired probability as a geometric series, we need to cut ECs into even smaller pieces than SECs.

%In the following we list these issues and explain them using examples.
% \para{Resulting difficulties.} 
 We outline key difficulties below, with detailed descriptions in %App.~\ref{app:example_sec4} 
 ~\cite{svi-sg-extended} in the interest of space.
\begin{enumerate}%[(1)]
	\item[(\textbf{1})] Although SVI maintains an over-approximation $\gub$, it is a \emph{global} bound for all states and thus cannot be used to locally deflate the value of some states.% (see Ex.~\ref{app:example1}). % in App.~\ref{app:example_sec4}).

	\item[(\textbf{2})] SVI derives the over-approximation for each state from both $\svireach$ and $\svistay$.
	However, as it depends on both variables, the same over-approximation can correspond to different pairs of these values. 
	Consequently, it is unclear whether to deflate $\svireach$ or $\svistay$; moreover, there may not even exist any choice that works correctly for the whole EC (or SEC).% (see Ex.~\ref{app:example2} and Ex.~\ref{ex:2stateCyclicDependency}).		

	\item[(\textbf{3})] Deflating only the best exit states of a SEC may result in non-convergence of the other states within the SEC.% (Ex.~\ref{app:example3}). % in App.~\ref{app:example_sec4}).
	
	\item[(\textbf{4})] Both $\svireach$ and $\svistay$ depend on a step-count $k$.
	Collapsing ECs results in not counting steps within ECs at all, while deflating performs multiple steps by jumping to the best exit. 
	In contrast, our handling of ECs needs to count steps exactly. 
	For this, we shall introduce a \enquote{delay} action in Sec.~\ref{ssec:new-notions}, which does not move us from the current state, yet counts as making a step.	
\end{enumerate}

We highlight that difficulties (\textbf{1}), (\textbf{3}), and (\textbf{4})
already arise in MDP, since the respective examples only contain Maximizer states.
Overall, deflating an entire SEC to its best exit harms correctness; and, deflating only the best exiting state $\bestExit_f^\Box$ leads to non-termination. 
%Our solution thus has to decompose the ECs in an even finer way, establishing a bijection of its parts to the set of its best exits.
As our solution goes in the direction of deflating exactly the best exiting states, we include an example for the arising complications here:

%\subsection{Why do we need the \emph{delay} action?}
%\label{ssec:delayExample}

\begin{example}
	\label{ex:2stateCyclicDependency}
	
	Consider the SG in Fig.~\ref{fig:2stateCyclicDependency}. 
	It contains a MEC $T = \unknown = \{s_0,s_1\}$.
	Both these are Maximizer states and have value $\nicefrac{1}{2}$, achieved by playing at least one of the exiting actions infinitely often.
		%\end{figure}
	Upon initialization, we have: 
	$\svireach^0_{s_0}=0,\svistay^0_{s_0}=1,\svireach^0_{s_1}=0,\svistay^0_{s_1}=1,\glb_0=0, \gub_0=1$.
	%$\svireach_0(s_0)=0, \svistay_0(s_0)=1,  \svireach_0(s_1)=0, \svistay_0(s_1)=1$ and  $\glb_0=0$, $\gub_0=1$.
	Comparing the exits from $T$, for $(s_0,b)$, the approximation $\svireach+\svistay\cdot\gub$ evaluates to $\nicefrac{1}{3}+\nicefrac{1}{3}\cdot1=\nicefrac{2}{3}$, which exceeds the approximation for $(s_1, b)$ given by $0.4+0.2\cdot1=0.6$.  
	Therefore, the best exit of $T$ in the first iteration is $\bestExit^\Box(T)=\{(s_0, b)\}$. 
	Forcing Maximizer to take this best exit leads to the following updated vectors:
	$\svireach^1_{s_0}=\nicefrac13,\svistay^1_{s_0}=\nicefrac13, \svireach^1_{s_1}=0, \svistay^1_{s_1}=1,
	\glb_1=0, \gub_1=1.$
%$\svireach_1(s_0)~=~\nicefrac13,\svistay_1(s_0)~=~\nicefrac13, $ $ \svireach_1(s_1)~=~0,  \svistay_1(s_1)~=~1$ and  $\glb_1=0$, $\gub_1=1$.
%\textcolor{red}{TODO: improve this}
\end{example}

	\begin{wrapfigure}{r}{0.5\textwidth}
		%	\begin{figure}[t]
			%\vspace*{-1.1in}
			\centering
			\resizebox{0.5\textwidth}{!}{
				\begin{tikzpicture}
					\node[max vertex, ] (s0) {$s_0$};
					\node[max vertex, right of=s0] (s1) {$s_1$};
					\node[chance state, left of=s0, node distance=1cm](c0){};
					\node[chance state, right of=s1, node distance=1cm](c1){};
					\node[goal, above left of=s0, node distance=2cm] (t0) {};
					\node[goal, above right of=s1, node distance=2cm] (t1) {};
					\node[sink, below left of=s0, node distance=2cm] (z0) {};
					\node[sink, below right of=s1, node distance=2cm] (z1) {};
					
					\path (s0) edge[bend left=30] node[above] {$a$} (s1);
					\path (s1) edge[bend left=30] node[below] {$a$} (s0);
					\draw (t0) to [out=110,in=170,loop,looseness=7] (t0);
					\draw (t1) to [out=70,in=10,loop,looseness=7] (t1);
					\draw (z0) to [out=290,in=250,loop,looseness=8] (z0);
					\draw (z1) to [out=250,in=290,loop,looseness=8] (z1);
					\draw (s0) to [out=150,in=210,loop,looseness=5] node[left] {$b$} node[pos=0.15, anchor=north] {\scriptsize{\nicefrac{1}{3}}} (s0);
					\draw (s1) to [out=-30,in=30,loop,looseness=5]  node[right] {$b$} node[pos=0.15, anchor=south] {\scriptsize{$0.2$}}(s1);
					%\path (s0) edge[-] node[below] {b} (c0);
					%\path (s1) edge[-] node[below] {b} (c1);
					\path (c0) edge node[pos=0.6, anchor=east] {\scriptsize{\nicefrac{1}{3}}} (t0);
					\path (c0) edge node[pos=0.6, anchor=east] {\scriptsize{\nicefrac{1}{3}}} (z0);
					\path (c1) edge  node[pos=0.6, anchor=west] {\scriptsize{0.4}}(t1);
					\path (c1) edge  node[pos=0.6, anchor=west] {\scriptsize{0.4}} (z1);
					%\path (c1) edge[->] (z1);
				\end{tikzpicture}%\myspace
			}
			\caption{SG with two maximizer nodes and self-loops}
			\label{fig:2stateCyclicDependency}%\myspace
		\end{wrapfigure}
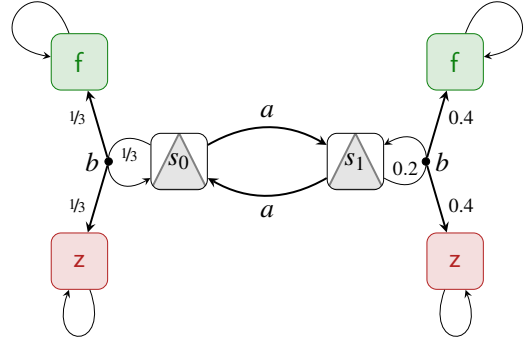
	In the second iteration, taking  $(s_0, b)$ again would decrease  $\svistay_2^{s_0}$ to $\nicefrac13\cdot\nicefrac13=\nicefrac19$ while increasing $\svireach_2^{s_0}$ to $\nicefrac13\cdot 1+\nicefrac13\cdot\nicefrac13=\nicefrac49$, yielding $\nicefrac49+\nicefrac19\cdot 1=\nicefrac59$.
	This is less than one step of $(s_1,b)$ yielding $0.4+0.2\cdot1=0.6$.
	Consequently, updating the best exit results in: 
	$\svireach^2_{s_0}=0,\svistay^2_{s_0}=1,\svireach^2_{s_1}=0.4, \svistay^2_{s_1}=0.2,\glb_2=0 ,\gub_2=1.$
In the third iteration, $\{(s_0, b)\}$ again is the best exit, which results in updating the vectors to the same values as in the first iteration.
This oscillation continues indefinitely.
As a result, the bounds remain unchanged, i.e.\ $\glb_i=0$, $\gub_i=1$  for all the subsequent iterations.

The non-termination arises because the algorithm keeps shifting the probability of remaining undecided entirely between two states (as reflected by $\svistay=1$ for either $s_0$ or $s_1$ in each iteration).  
To address this, we enforce monotonic progress of $\svireach+\svistay\cdot\gub$ (see Eq.~\eqref{eqn:impUB}); if updating a state would break monotonicity, it instead plays a \enquote{delay} action, see Sec.~\ref{ssec:new-notions}.

\subsection{Our solution}
We propose a new approach that updates the value vectors consisting of two key ideas: 1) a \enquote{recusive variant} of deflate which identifies the best exits by recursively cutting ECs further to break all cyclic dependency, and
2) a \enquote{delay condition} deciding whether to play a regular action or to wait in the current state by playing a \enquote{delay} action,
%that decides which actions should indeed be played,
enforcing monotonicity and progress of upper bounds.
This method is also within the spirit of SVI, i.e.\ does not require to maintain local upper bound for every state.
Our solution can be seen as Alg.~\ref{alg:svi-sg-high-level}, where the computation of the strategies and bounds are adjusted by the re-defined procedures described in the following subsections.

%\vspace{-0.1in}
\subsubsection{Best exit set (BES)}\label{ssec:bes}
This section examines the hierarchical structure of ECs and their respective exits. 
The procedure \textsc{BestExitSet} recursively identifies the best exits for each sub-EC, resulting in a comprehensive set of the different best exits---capturing relevant state-action pairs that serve as exits, each corresponding to some EC.
We remark that the set of best exits is empty for ECs that cannot be exited at all.

\begin{algorithm*}[!ht]
	\caption{Compute best exit set for the sound value iteration} 
	\label{alg:bestexitset-high-level}
	\begin{algorithmic}[1]
		\Require an SG $\G$, over-approximation $f\geq \val$, an EC $\mec$
		\Ensure Set of correct exiting state-action pairs of ECs in $\mec$ wrt $f$
		\Procedure{BestExitSet}{$\G, f, \mec$}
		\State BES $\gets \emptyset$ \Comment{Initialization}

		\If {$\bestExit^\Box_f(\mec)$ is empty} \label{line:BES-hl:BestExitEmpty}\Comment{Base case}
		\State Add the states of $\mec$ to $\sinks$ and remove them from $\unknown$ \label{line:BES-hl:removeTrapECs}
		\State \Return BES \label{line:BES-hl:returnBaseCase}

		\Else \Comment{Recursive step: {$\bestExit^\Box_f(\mec)$ is non-empty}}
		\State Add the best-exits of $\mec$ to BES \label{line:BES-hl:insertBE}
		\State Find MECs in the induced graph after removing all the states in BES\label{line:BES-hl:MECsInInducedGraph}
		\State For every such MEC $M$, add \textsc{BestExitSet}($\G, f, M$) to BES \label{line:BES-hl:callBESOnAllMECs}
		\EndIf
		\State \Return BES {\Comment{Set of correct exits \emph{wrt} $f$ for all EC within $\mec$}}\label{line:BES-hl:return2}
		\EndProcedure
	\end{algorithmic}
\end{algorithm*}

Alg.~\ref{alg:bestexitset-high-level} describes the procedure to compute the \emph{best exit set (BES)} for a given EC \(\mec\) relative to a function \(f \geq \val\).
An equivalent, fully formal description is in the extended version~\cite{svi-sg-extended}.%App.~\ref{app:algos_program_sec4}.
%This set, referred to as the \emph{best exit set (BES)}, contains all relevant exiting state-action pairs.
%Alg.~\ref{alg:bestexitset-high-level} shows the procedure \textsc{BestExitSet} that computes the relevant set of best exits in a given EC $\mec$ relative to a function $f \geq \val$ -- which we call \emph{best exit set (BES)} --, which gives us the current estimates of values of the (exiting) actions.
If there is no Maximizer exit (Lines~\ref{line:BES-hl:BestExitEmpty}-\ref{line:BES-hl:returnBaseCase}), then \phantomsection\newtarget{def:trap}{the Minimizer can force any play in the EC to remain there forever; we call such ECs a \emph{trap ECs}}
and denote the set of all such ECs as \trap. 
All states in $\trap$ can be included in the sink states $\sinks$ and removed from the unknown states set $\unknown$ (Line~\ref{line:BES-hl:removeTrapECs}).
Otherwise, all the best exit pairs are added to a set BES (Line~\ref{line:BES-hl:insertBE}). 
After removing all the states of BES, the procedure is recursively called on all the  MECs of induced graph (Lines~\ref{line:BES-hl:MECsInInducedGraph},~\ref{line:BES-hl:callBESOnAllMECs}). Finally, Line~\ref{line:BES-hl:return2} returns exits for all the relevant sub-ECs of $\mec$. 

The recursive nature of the procedure leverages the \emph{inductive, attractor-like structure of MECs}, providing a systematic decomposition that identifies all relevant exits while avoiding cyclic dependencies. 
Importantly, when the best exit of a MEC is identified removed, this often eliminates multiple sub-ECs simultaneously. 
%This occurs because the removed state, being an exit of a MEC, serves as an exit for several sub-ECs. 
%Consequently, the removal of such a state eliminates the requirement of considering these sub-ECs, as it is the exiting state using the already found exit. 
This hierarchical dependency inherently reduces the effective number of ECs that must be considered in subsequent recursive calls: It is linear in the number of states in the MEC despite the potential exponential total number of ECs.
We prove
%App.~\ref{app:proofs_sec4}
the following property of Alg.~\ref{alg:bestexitset-high-level}  in~\cite{svi-sg-extended}: It is sound insofar that every state-action pair included in the BES has a higher over-approximation than value; and for every sub-EC, some exit is included in the BES (or it has been merged into the sink states).%\todo{Currently we are inconsistent with subEC and sub-EC. Make consistent.}

\begin{restatable}{lemma}{lemmaBestExitProps}
		\label{lem:bestExitSetProps}
	For every EC $T$ and $f \geq \val$, Alg.~\ref{alg:bestexitset-high-level} terminates and returns a set BES that has the following two properties:
	(I)~\emph{Soundness of over-approximation:} for all $(s,a) \in \text{BES}: f(s,a) \geq \val(s)$. 
	(II)~\emph{Completeness of BES:} For all ECs $T'\subseteq T$, if  $T \notin \trap$, then $\exists~ (s,a) \in \text{BES}: (s,a)$ exits $T$. Otherwise, if $T \in \trap$, then $\forall~ s \in T: s \in \sinks \land s \not\in \unknown$.
%	\smallskip
%	\begin{enumerate}
%		\item \emph{Soundness of over-approximation:} 
%	%	\begin{itemize}
%			for all $(s,a) \in \text{BES}: f(s,a) \geq \val(s)$
%	%	\end{itemize}
%		\item \emph{Completeness of BES:} \\ for all EC T: 
%		\begin{itemize}
%			\item  $T \notin \trap \Rightarrow \exists~ (s,a) \in \text{BES}: (s,a)$ exits $T$
%			%\item $(T \in \trap \land s \in T) \Rightarrow s \in \sinks \land s \not\in \unknown$
%			\item $T \in \trap \Rightarrow \forall~ s \in T: s \in \sinks \land s \not\in \unknown$
%		\end{itemize}
%	\end{enumerate}
	($(s,a)$ exits $T$ denotes that $\post(s,a) \not\subseteq T$.)
	
\end{restatable}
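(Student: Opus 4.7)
The plan is to prove termination together with Properties (I) and (II) simultaneously, by strong induction on $|\mec|$, the number of states in the EC passed to \textsc{BestExitSet}. Termination is immediate from the induction: the base case on Line~\ref{line:BES-hl:BestExitEmpty} returns without recursing, and in the recursive case each MEC $M$ identified on Line~\ref{line:BES-hl:MECsInInducedGraph} lives in the graph induced by deleting the non-empty set of best-exit Maximizer states of $\mec$, so $|M| < |\mec|$ and the induction hypothesis applies to each sub-call.

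For Property (I), the base case adds nothing to BES and is vacuously true. In the inductive step, the pairs added on Line~\ref{line:BES-hl:insertBE} all lie in $\bestExit^\Box_f(\mec)$. For such $(s,a)$, I would invoke the standard SG fact that for every Maximizer state $s$ of an EC admitting a Maximizer exit, $\val(s) \le \max_{(s'',a'') \text{ Max exit of } \mec} \val(s'',a'')$; the argument being that Minimizer can always enforce that Maximizer either remains in the EC forever---yielding $0$ since ECs inside $\unknown$ contain neither targets nor sinks---or eventually takes one of its exits. Combining this with $f \ge \val$ and the $\argmax$ nature of the best-exit definition gives
\[
f(s,a) \;=\; \max_{(s'',a'') \text{ Max exit of } \mec} f(s'',a'') \;\ge\; \max_{(s'',a'')} \val(s'',a'') \;\ge\; \val(s).
\]
Pairs added by the recursive call \textsc{BestExitSet}$(\G,f,M)$ satisfy Property (I) by the induction hypothesis on the strictly smaller MEC $M$, noting that $f \ge \val$ is preserved because sinking trap sub-ECs can only decrease $\val$.

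For Property (II), fix an arbitrary sub-EC $T' \subseteq \mec$. If $T' \notin \trap$, I distinguish two cases. Either some state of $T'$ appears as the source of a pair added on Line~\ref{line:BES-hl:insertBE}; then that pair exits $\mec$ and therefore also exits $T' \subseteq \mec$, and it already lies in BES. Or $T'$ is disjoint from the deleted best-exit states; then $T'$ survives as an EC of the induced graph and is contained in some MEC $M$ returned on Line~\ref{line:BES-hl:MECsInInducedGraph}, so the induction hypothesis on the sub-call on $M$ yields a pair in BES exiting $T'$. If $T' \in \trap$, then $T'$ has no Maximizer exit, so no state of $T'$ can ever be a best-exit source; consequently $T'$ survives every recursive removal and sits inside a strictly decreasing chain of MECs, which must stabilise at some trap MEC $M^\star \supseteq T'$ triggering the base case, at which point Line~\ref{line:BES-hl:removeTrapECs} moves all states of $T'$ into $\sinks$.

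The main technical obstacle I anticipate is formalising that every sub-EC $T'$ untouched by the current removal survives intact as an EC of the induced graph and falls inside exactly one of the newly computed MECs; this requires unpacking the EC characterisation via its action certificate $B$ and verifying that $B$ restricted to $T'$ remains a valid certificate after the graph surgery. A secondary obstacle is to justify the EC upper bound on $\val$ used in Property (I); although folklore, it relies on $\unknown$ being target- and sink-free and must be applied consistently across iterations of the enclosing algorithm, which may already have enlarged $\sinks$ by absorbing earlier trap ECs---so monotonicity of the hypothesis $f \ge \val$ through these updates should be spelled out explicitly.
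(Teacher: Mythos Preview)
Your proposal is correct and follows the natural strong-induction-on-$|\mec|$ argument; the paper defers its own proof to the extended version~\cite{svi-sg-extended}, but the structure you outline---termination via strictly shrinking recursive calls, Property~(I) via the deflating bound $\val(s)\le\max_{(s'',a'')\text{ Max exit}}\val(s'',a'')$ from~\cite{KKKW18}, and Property~(II) via the case split on whether $T'$ intersects the removed best-exit states---is exactly the expected route. One small clarification: your parenthetical that ``sinking trap sub-ECs can only decrease $\val$'' is unnecessary, since states in a trap EC genuinely have value~$0$ (Minimizer can force the play to stay forever inside $\unknown$), so moving them to $\sinks$ does not alter $\val$ at all and $f\ge\val$ carries through the recursion unchanged.
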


At each iteration $k$, we can compute the overall set $\bestExitSet$ consisting of unions of all the best-exit sets for all ECs, based on the current over-approximation: %\myspaceb
\begin{equation}
	\boxed{
		\begin{aligned}
			\label{eqn:HandleECs}
			\bestExitSet^k \gets \bigcup_{\substack{\mec \subseteq \unknown\\ \mec\text{ is MEC}}}\textsc{BestExitSet}(\G, \svireach^k_s+\svistay^k_s \cdot \gub_k, \mec)
		\end{aligned}
	}%\myspace
\end{equation}

\subsubsection{Delays and strategies}
\label{ssec:new-notions}
The \emph{delay} action helps to ensure monotonic progress. 
We refer the reader to  Ex.~\ref{ex:2stateCyclicDependency} 
 for an explanation of non-termination and cyclic behaviour without the delay action. To find out in which state we need to play a delay action, we first introduce a condition `\textsc{improvedUB}?($s$)' that prevents the Maximizer state to update the approximation to worse values than those of current estimates (monotonicity):

\begin{equation}
	\label{eqn:impUB}
	\boxed{
		\begin{aligned}
			\textsc{improvedUB}?(s) := \;
			 \svireach^{k+1}_s + \svistay^{k+1}_s \cdot \gub_k  \leq \svireach^k_s + \svistay^k_s \cdot \gub_k
		\end{aligned}
	}
\end{equation}

The following condition then captures when to delay rather than to play another action.
{\small
\begin{equation}
	\label{eq:delayaction}
	\boxed{
		\begin{aligned}
			\textsc{delayAction}?(s) := s \in \unknown_{\Box}  \land \lnot \textsc{ImprovedUB}?(s)
		\end{aligned}
	}
\end{equation}
}

Intuitively, if a Maximizer state $s$ is not a best exit of an EC and the upper bound for $s$ does not improve, we play a delay action which we denote as $d$. In particular, under the condition \textsc{delayAction}$(s)?$, the approximations for state $s$ at iteration $k+1$ are restored to the values from the $k$th iteration as detailed in the new part of Eq.~\eqref{alg:update-ec}:

\begin{equation}
	\fontsize{9.1}{12}\selectfont
	\label{alg:update-ec}
	\boxed{
		\begin{aligned}
			&\textsc{Bellman-Update ($\mathcal{B}$):}\\
			&\mbox{if~}{\textsc{delayAction?}(s)} & \triangleright \mbox{ New part to handle ECs}\\ 
			&\hspace{20pt}\svireach^{k+1}_s \gets \svireach^{k}_{s};~~~\svistay^{k+1}_s \gets \svistay^{k}_{s}\\
			& \mbox{else} & \triangleright \mbox{ Same as before}\\
			&\hspace{20pt}\svireach^{k+1}_s \gets \sum_{s' \in \states}\distribution(s,\pi_{k+1}(s), s') \cdot \svireach^{k}_{s'};~~~&\svistay^{k+1}_{s} \gets \sum_{s' \in \states}\distribution(s,,\pi_{k+1}(s), s') \cdot \svistay^{k}_{s'}\\			
		\end{aligned}
	}
\end{equation}

%\begin{equation}
%	\label{alg:update-ec}
%	\boxed{
%		\begin{aligned}
%			&\textsc{Bellman-Update ($\mathcal{B}$):} \\
%			&\mbox{if } \textsc{delayAction?}(s) \\
%			& \hspace{20pt}//\text{New part to handle ECs}\\
%			&\hspace{20pt} \svireach^{k+1}_s \gets \svireach^{k}_{s}; \\
%			&\hspace{20pt} \svistay^{k+1}_s \gets \svistay^{k}_{s} \\
%			&\mbox{else} \\
%			& \hspace{20pt}//\text{Same as before}\\
%			&\hspace{20pt} \svireach^{k+1}_s \gets \sum_{s' \in \states} 
%			\distribution(s, \pi_{k+1}(s), s') \cdot \svireach^{k}_{s'}; \\
%			&\hspace{20pt} \svistay^{k+1}_{s} \gets \sum_{s' \in \states} 
%			\distribution(s, \pi_{k+1}(s), s') \cdot \svistay^{k}_{s'} \\
%		\end{aligned}
%	}
%\end{equation}

Finally, we re-define the strategies $\sigma_k$ with the new cases of exits and delays (while $\tau_k$ is unchanged):

%\begin{equation}
%	%\scriptsize 
%	\small
%	\label{eqn:newstrat}
%	\boxed{
%		\begin{aligned}
%			&\textsc{``$k$-step'' Maximizer's Contracting Strategies:} \\
%			&\sigma_0(\rho) \in \Av(\rho_0), \quad \sigma_{k+1}(\rho) \gets \\
%			&\begin{cases}
%				\displaystyle
%				\sigma_k(\rho'), &\mbox{if } \rho = sa\rho'\\
%				d, &\mbox{if } \rho = s \land \textsc{delayAction?}(s) \\
%				b, &\mbox{if } \rho = s \land (s, b) \in \bestExitSet^k \\
%				\begin{aligned}
%					\displaystyle\argmax_{a \in \Av(s)} & \sum_{s' \in S} \distribution(s, a, s') \cdot \Big( \\
%					&\probability^{\sigma_k, \tau_k}_{s'}(\reach^{\leq k}\targets) + \\
%					&\probability^{\sigma_k, \tau_k}_{s'}(\stay^{\leq k}\targets) \cdot \gub_k 
%					\Big),
%				\end{aligned} &\mbox{otherwise}
%			\end{cases}
%		\end{aligned}
%	}
%\end{equation}

\begin{equation}
	\fontsize{8.1}{11}\selectfont
	\label{eqn:newstrat}
	\boxed{
		\begin{aligned}
			&\textsc{``$k$-step'' Maximizer's contracting Strategies:}\\
			&\sigma_0(\rho) \in \Av(\rho_0) \\	
			&\sigma_{k+1}(\rho) \gets
			\begin{cases}
				\displaystyle
								\sigma_k(\rho') &\mbox{if } \rho = sa\rho'\\
								d &\mbox{if }  \rho= s \land \textsc{delayAction?}(s) \\
								b &\mbox{if }  \rho= s \land (s,b)\in \bestExitSet^k\\
								\displaystyle\in\argmax_{a\in\Av(s)} \sum_{s' \in S} \distribution(s, a, s') \cdot \Big(\probability^{\sigma_k, \tau_k}_{s'}  (\reach^{\leq k}\targets) + \probability^{\sigma_k, \tau_k}_{s'}  (\stay^{\leq k}\targets) \cdot \gub_k \Big) &\mbox{otherwise}
			\end{cases}
		\end{aligned}
	}
\end{equation}

\subsubsection{SVI algorithm for SG with ECs}

First, we prove correctness of the approximations and the strategies in the general case with ECs, considering the two additional ``cases''. 
The proof can be found in the extended version~\cite{svi-sg-extended}.%App.~\ref{app:proofs_sec4}.

\begin{restatable}{theorem}{theoremSVIGenSGEC}
	\label{thm:svi-gen-sg-ec} 

Fix an SG $\G$ with its state space partitioned in $\targets$, $\sinks$ and $\unknown$. Let $\glb_k, \gub_k \in \Reals$ be bounds as defined by Eq.~\eqref{eq:def-bounds}, i.e. for all $k \geq 0$, $\glb_k \leq \val(s) \leq \gub_k$ for all $s \in \unknown$.
For  $\sigma_k \in \stratsMax$ and $\tau_k \in \stratsMin$ defined as in Eqs.~\eqref{eqn:newstrat} and~\eqref{eq:strats}, respectively,
the following inequality holds:	

%\begin{align*}
%	\probability_s^{\sigma_k,\tau_k} (\reach^{\leq k} \targets) +
%	\probability_s^{\sigma_{k},\tau_{k}} (\square^{\leq k} \unknown) \cdot \glb_k
%	\leq
%	\val(s)
%	\leq
%	\probability_s^{\sigma_{k},\tau_{k}} (\reach^{\leq k} \targets) +
%	\probability_s^{\sigma_{k}, \tau_{k}} (\square^{\leq k} \unknown) \cdot \gub_k
%\end{align*}

\begin{align*}
	\probability_s^{\sigma_k, \tau_k} (\reach^{\leq k} \targets) 
	+ \probability_s^{\sigma_k, \tau_k} (\square^{\leq k} \unknown) \cdot \glb_k 
	\leq \val(s) \leq 
	\probability_s^{\sigma_k, \tau_k} (\reach^{\leq k} \targets) 
	+ \probability_s^{\sigma_k, \tau_k} (\square^{\leq k} \unknown) \cdot \gub_k
\end{align*}

\end{restatable}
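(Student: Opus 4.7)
The plan is to extend the induction scheme used in the proof of Theorem~\ref{thm:svi-sg-ecfree}, adapting it to the two new rules (best exit and delay) in Eq.~\eqref{eqn:newstrat}. I would induct on $k$ and prove both sandwich inequalities simultaneously. The base case $k=0$ follows from the initializations ($\svireach^0_s = 0$, $\svistay^0_s = 1$ for $s\in\unknown$) combined with the trivial bounds $\glb_0 = 0 \leq \val(s) \leq 1 = \gub_0$.

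For the inductive step, I would treat the upper bound; the lower bound is fully dual, and since $\tau_k$ from Eq.~\eqref{eq:strats} receives no EC-related modifications, that direction carries over from Theorem~\ref{thm:svi-sg-ecfree} almost verbatim via Ineq.~\eqref{eq:minCorr}. Applying inequality~\eqref{eq:maxCorr} with an optimal Maximizer strategy $\sigma^*$ and instantiating the arbitrary Minimizer strategy by $\tau_{k+1}$ gives $\val(s) \leq \sup_{\sigma \in \stratsMax} \bigl[\probability_s^{\sigma,\tau_{k+1}}(\reach^{\leq k+1}\targets) + \probability_s^{\sigma,\tau_{k+1}}(\stay^{\leq k+1}\unknown) \cdot \gub_{k+1}\bigr]$, using the premise $\gub_{k+1} \geq \val$ on $\unknown$. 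It then remains to show that the specific strategy $\sigma_{k+1}$ attains a value $\geq \val(s)$ in this modified $(k+1)$-step objective, even though it does not literally maximize it. I would perform a case split on the four branches of Eq.~\eqref{eqn:newstrat}: the history branch $\rho = sa\rho'$ reduces to $\sigma_k$ by construction, invoking the inductive hypothesis at step $k$; the argmax branch inherits from Theorem~\ref{thm:svi-sg-ecfree} through the decision-value safeguard; the best-exit branch uses Lemma~\ref{lem:bestExitSetProps}(I), which guarantees $f(s,b) \geq \val(s)$ for $f(s') = \svireach^k_{s'} + \svistay^k_{s'} \cdot \gub_k$, and expanding via the Bellman update in Eq.~\eqref{alg:update-ec} yields $\svireach^{k+1}_s + \svistay^{k+1}_s \cdot \gub_k \geq \val(s)$; finally, the delay branch preserves $\svireach^{k+1}_s = \svireach^k_s$ and $\svistay^{k+1}_s = \svistay^k_s$, so the inductive hypothesis at step $k$ directly gives $\val(s) \leq \svireach^{k+1}_s + \svistay^{k+1}_s \cdot \gub_k$.

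The main obstacle will be reconciling the bounds obtained in the best-exit and delay branches (phrased in terms of $\gub_k$) with what is actually needed (in terms of $\gub_{k+1}$). Since $\gub$ is monotonically non-increasing across iterations---Eq.~\eqref{eq:def-bounds} takes the $\min$ with $\gub_{k-1}$---one cannot freely substitute $\gub_{k+1}$ for $\gub_k$ on the right-hand side. Closing this gap requires a finer look at the bound update: either $\gub_{k+1} = \gub_k$, in which case the transfer is immediate, or $\gub_{k+1}$ equals the geometric-series/decision-value expression of Eq.~\eqref{eq:def-bounds}, in which case the structural invariants enforced by the delay condition~\eqref{eq:delayaction} and the monotonicity check~\eqref{eqn:impUB}, combined with the global premise $\gub_{k+1} \geq \val$, must together rule out a strict decrease that would invalidate the sandwich. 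This three-way interaction between best exits, delays, and the global-bound update is precisely why extending SVI to ECs is technically non-trivial, and making it rigorous will be the heart of the argument.
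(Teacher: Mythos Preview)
The paper does not prove this statement in the main text; it defers entirely to the extended version and only adds the Remark that the delay action ``changes the semantics'' of the step-bounded probabilities. Your plan---induction on $k$ with a four-way case split over Eq.~\eqref{eqn:newstrat}, invoking Lemma~\ref{lem:bestExitSetProps}(I) for the best-exit branch and the inductive hypothesis for the delay branch---is the natural structure and almost certainly matches the deferred argument. You also correctly isolate the $\gub_k$-versus-$\gub_{k+1}$ mismatch as the crux, and the observation that Eq.~\eqref{eqn:updateGlobalBounds} freezes $\gub$ whenever any state delays is indeed the lever that resolves it.

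There is, however, a genuine gap in your lower-bound direction. You dismiss it as carrying over ``almost verbatim via Ineq.~\eqref{eq:minCorr}'' on the grounds that $\tau_k$ is unmodified. But Ineq.~\eqref{eq:minCorr} is derived from $\val(s)=\sup_\sigma\inf_\tau\,\probability^{\sigma,\tau}(\reach\targets)\geq\inf_\tau\,\probability^{\arbsigma,\tau}(\reach\targets)$, which requires $\arbsigma$ to be a \emph{legitimate} strategy in $\G$. Once $\sigma_{k+1}$ plays the delay action $d$---and note this can happen not only at the root $s$ but at any state reached later, via the recursion $\sigma_{k+1}(sa\rho')=\sigma_k(\rho')$---it is no longer a strategy in the original game, because $d\notin\Av(s)$. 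The chain of inequalities leading to~\eqref{eq:minCorr} therefore does not apply to $\sigma_{k+1}$ as written. This is precisely what the paper's Remark is flagging when it says the semantics of $\probability_s^{\sigma_k,\tau_k}(\reach^{\leq k}\targets)$ and $\probability_s^{\sigma_k,\tau_k}(\stay^{\leq k}\unknown)$ change. The standard repair is to pass to an augmented game $\G'$ in which every Maximizer state carries an additional self-loop action; one then argues that $\val_{\G'}=\val_\G$ (self-loops never help Maximizer reach a target), that $\sigma_{k+1}$ is a valid strategy in $\G'$, and that the EC-free argument of Theorem~\ref{thm:svi-sg-ecfree} transfers to $\G'$. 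This is not deep, but it is not verbatim either, and without it the lower-bound half of the sandwich is unjustified.
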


\begin{remark}
	Introducing delay action changes the semantic of $\probability_s^{\sigma_k,\tau_k} (\reach^{\leq k} \targets) $ and $\probability_s^{\sigma_{k},\tau_{k}} (\square^{\leq k} \unknown)$ in a way that it is now the optimal reach and stay probability for an $m \leq k$, i.e., \emph{at most $k$}. We prove that this does not hurt the correctness.
\end{remark}

Now we show how to modify Alg.~\ref{alg:svi-sg-high-level} to obtain an algorithm that works for SG with ECs.
First, we instantiate the Maximizer strategy $\sigma$ by Eq.~\eqref{eqn:newstrat}.
Second, we use the new update \textsc{Bellman-update} as in Eq.~\eqref{alg:update-ec}.
Finally, we use the following condition \textsc{updateGlobalBounds?}:

%\begin{equation}
%	\boxed{
%	\begin{aligned}
%		\label{eqn:updateGlobalBounds}
%		\textsc{updateGlobalBounds}?:= \forall s \in \unknown. \svistay_s< 1  \land \lnot \textsc{delayAction?}(s)
%	\end{aligned}
%}
%\end{equation}

%\begin{equation}
%	\boxed{
%		\begin{aligned}
%			\label{eqn:updateGlobalBounds}
%			\textsc{updateGlobalBounds}? := \forall s \in \unknown. \; 
%			\Big( \svistay_s < 1 \land \lnot \textsc{delayAction?}(s) \Big)
%		\end{aligned}
%	}
%\end{equation}
{\small
\begin{equation}
	\boxed{
		\begin{aligned}
			\label{eqn:updateGlobalBounds}
			\textsc{updateGlobalBounds}? := \forall s \in \unknown. \; 
			 \Big( \svistay_s < 1  \land \lnot \textsc{delayAction?}(s) \Big)
		\end{aligned}
	}
\end{equation}
}

 Intuitively, whenever we play a delaying action, we conservatively do not improve the global bounds~$\glb$ and $\gub$.
A key property for proving termination is the following monotonicity of the over-approximation, proven in the extended version of this paper~\cite{svi-sg-extended}.
%App.~\ref{app:proofs_sec4}.
\begin{restatable}{lemma}{lemmamonotonicU}
	\label{lem:monotonicU}
Fix $\gub, \glb$ such that $\glb \leq \val(s) \leq \gub$. Alg.~\ref{alg:svi-sg-high-level} with the new sub-procedures from Sec.~\ref{sec:algorithm-with-ec} (Maximizer strategy~\eqref{eqn:newstrat}, new \textsc{Bellman-update}~\eqref{alg:update-ec} and \textsc{isUpdateGlobalBounds?}~\eqref{eqn:updateGlobalBounds}) computes a monotonically non-increasing sequence of $\probability_s^{\sigma_{k},\tau_{k}} (\reach^{\leq k} \targets) +
\probability_s^{\sigma_{k}, \tau_{k}} (\square^{\leq k} \unknown) \cdot \gub$,
i.e.\
%\begin{align*}
%	\probability_s^{\sigma_{k+1},\tau_{k+1}} (\reach^{\leq {k+1}} \targets) +
%	\probability_s^{\sigma_{k+1},\tau_{k+1}} (\square^{\leq {k+1}} \unknown) \cdot \gub \leq
%	\probability_s^{\sigma_k,\tau_k} (\reach^{\leq k} \targets) +
%	\probability_s^{\sigma_{k},\tau_{k}} (\square^{\leq k} \unknown) \cdot \gub
%\end{align*}

\begin{align*}
	\probability_s^{\sigma_{k+1}, \tau_{k+1}} (\reach^{\leq {k+1}} \targets) 
	+ \probability_s^{\sigma_{k+1}, \tau_{k+1}} (\square^{\leq {k+1}} \unknown) \cdot \gub 
	\leq 
	\probability_s^{\sigma_k, \tau_k} (\reach^{\leq k} \targets) 
	+ \probability_s^{\sigma_k, \tau_k} (\square^{\leq k} \unknown) \cdot \gub
\end{align*}

\end{restatable}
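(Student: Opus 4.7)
The plan is to induct on $k$, abbreviating $U_s^k := \svireach_s^k + \svistay_s^k \cdot \gub$ and showing $U_s^{k+1} \leq U_s^k$ for every $s$. The $(k+1)$-step strategy structure in Eqs.~\eqref{eq:strats} and~\eqref{eqn:newstrat} splits any play into a first decision followed by the $k$-step strategy, giving the local equation
\[U_s^{k+1} \;=\; \sum_{s' \in \states} \distribution(s, \pi_{k+1}(s), s') \cdot U_{s'}^k\]
for $\pi_{k+1} \in \{\sigma_{k+1}, \tau_{k+1}\}$ (on delayed states, the update in Eq.~\eqref{alg:update-ec} sets $U_s^{k+1} = U_s^k$ directly). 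This reduces the claim to a one-step comparison that propagates through the induction hypothesis. The base case $k=0$ gives $U_s^0 = \gub$ on $\unknown$, and the required inequality $U_s^1 \leq \gub$ follows from $\val \leq \gub$ together with Theorem~\ref{thm:svi-gen-sg-ec}.

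For the inductive step I would case-split on the state type. Targets and sinks are immediate because $\svireach$ and $\svistay$ are constant. For $s \in \unknown_\Box$, if \textsc{delayAction?}$(s)$ holds then the delayed branch of Eq.~\eqref{alg:update-ec} gives $U_s^{k+1} = U_s^k$ trivially; otherwise the negation of~\eqref{eq:delayaction} is precisely \textsc{improvedUB?}$(s)$, which is literally the required inequality when $\gub$ equals the current $\gub_k$. Along the way I would record two supporting invariants: first, that $\gub_k$ is non-increasing in $k$ (immediate from the outer $\min$ in Eq.~\eqref{eq:def-bounds}), so substituting a later $\gub$ into an earlier inequality is sound; second, that the gate~\eqref{eqn:updateGlobalBounds} prevents $\gub_k$ from being refreshed in an iteration where any Maximizer state still demands to delay, keeping the bound and the delay condition aligned.

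The most delicate case is $s \in \unknown_\circ$, for which no delay mechanism is defined. Applying the IH termwise yields $U_s^{k+1} \leq \sum_{s'} \distribution(s, \tau_{k+1}(s), s') \cdot U_{s'}^{k-1}$, which must then be compared with $U_s^k = \sum_{s'} \distribution(s, \tau_k(s), s') \cdot U_{s'}^{k-1}$. Here I would exploit the Minimizer-optimality of $\tau_{k+1}$ against $\svireach + \svistay \cdot \glb_k$ from Eq.~\eqref{eq:strats}: writing the gap between the two sums as a linear combination in $\svireach^{k-1}$ and $\svistay^{k-1}$, the residual mismatch can be controlled by $\glb_k \leq \gub$ together with nonnegativity of probabilities, which closes the comparison.

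The hard part I anticipate is precisely this Minimizer step---transferring the optimality of $\tau_{k+1}$, which is computed against $\glb_k$, into an inequality about $U$ defined by $\gub$. A secondary obstacle is bookkeeping across iterations in which $\gub_k$ itself evolves: I would use the monotonicity of $\gub_k$ and the gate~\eqref{eqn:updateGlobalBounds} to lift the per-iteration \textsc{improvedUB?} inequality into the global monotonicity claimed in the lemma. Once these invariants are pinned down, the induction closes uniformly over all state types.
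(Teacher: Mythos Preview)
Your induction scheme and the Maximizer case are fine: the delay mechanism in Eq.~\eqref{alg:update-ec} is engineered precisely so that whenever the tentative Bellman update would violate $\svireach^{k+1}_s+\svistay^{k+1}_s\cdot\gub_k\leq\svireach^{k}_s+\svistay^{k}_s\cdot\gub_k$, the old values are restored; monotonicity at Maximizer states is therefore by construction. (A minor point: your base case does not follow from Theorem~\ref{thm:svi-gen-sg-ec}, which only gives $\val\leq U^k$, not $U^1\leq U^0$; it follows instead from $\svireach^1_s+\svistay^1_s\leq 1$ together with the algorithm's initial $\gub_0=1$.)

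The real problem is the Minimizer step, and your sketch does not close it. Write $a=\tau_{k+1}(s)$, $b=\tau_k(s)$, and for an action $c$ put $R_c=\sum_{s'}\delta(s,c,s')\,\svireach_{s'}^{k-1}$, $S_c=\sum_{s'}\delta(s,c,s')\,\svistay_{s'}^{k-1}$. After applying the IH you need $(R_a-R_b)+\gub\,(S_a-S_b)\leq 0$. But $\tau_k$-optimality at step $k{-}1$ only gives $(R_a-R_b)+\glb\,(S_a-S_b)\geq 0$, and replacing $\glb$ by the larger $\gub$ moves the expression \emph{up} whenever $S_a>S_b$---the inequality goes the wrong way. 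Using $\tau_{k+1}$'s optimality (which is what you actually propose) does not help either: it concerns step-$k$ quantities, and transporting it back to step $k{-}1$ via the IH reproduces the same sign obstruction. Concretely, Minimizer may switch to an action with smaller $\svireach$ but larger $\svistay$, which improves the $\glb$-weighted lower approximation while \emph{worsening} the $\gub$-weighted upper one; ``$\glb\leq\gub$ plus nonnegativity of probabilities'' is simply not enough to rule this out. The paper defers the full proof to the extended version, but it cannot be following your route; a correct argument for Minimizer states must invoke additional structure beyond the single inequality $\glb\leq\gub$, and this is the missing idea in your plan.
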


\begin{theorem}
	Fix an SG $\G$ with its state space partitioned in $\targets$, $\sinks$ and $\unknown$ and precision $\varepsilon>0$. 
	Alg.~\ref{alg:svi-sg-high-level} with the new sub-procedures from Sec.~\ref{sec:algorithm-with-ec} (Maximizer strategy~\eqref{eqn:newstrat}, new \textsc{Bellman-update}~\eqref{alg:update-ec} and \textsc{isUpdateGlobalBounds?}~\eqref{eqn:updateGlobalBounds}) is correct and terminates with an $\varepsilon$-close approximation to the value $\val'$, i.e. for all $s \in \unknown, |\val'(s)- \val(s)| \leq \varepsilon$.
\end{theorem}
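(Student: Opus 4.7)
The plan is to split the proof into correctness of the returned value at termination and termination of the main loop.

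For correctness, I would invoke Theorem~\ref{thm:svi-gen-sg-ec} at the iteration $k$ at which the loop exits. It sandwiches $\val(s)$ between $\svireach^k_s + \svistay^k_s \cdot \glb_k$ and $\svireach^k_s + \svistay^k_s \cdot \gub_k$ for every $s \in \unknown$. Since Alg.~\ref{alg:svi-sg-high-level} returns the midpoint of those two bounds and the termination test on Line~\ref{line:svi:terminationCond} enforces $\svistay^k_s \cdot (\gub_k - \glb_k) < 2\varepsilon$ uniformly over $\unknown$, the returned $\val'(s)$ differs from $\val(s)$ by at most $\varepsilon$. For $s \in \targets \cup \sinks$, the trivial initialization already yields the exact value.

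For termination, let $U_k(s) \eqdef \svireach^k_s + \svistay^k_s \cdot \gub_k$ be the current over-approximation and $L_k(s)$ the analogous under-approximation. By Lemma~\ref{lem:monotonicU} the sequence $(U_k(s))_k$ is monotonically non-increasing, and a symmetric argument shows that $(L_k(s))_k$ is non-decreasing; both are bounded in the correct direction by $\val(s)$ (Theorem~\ref{thm:svi-gen-sg-ec}), so both converge. The key claim is that the two limits coincide with $\val(s)$. Towards this, I would exploit Lemma~\ref{lem:bestExitSetProps}: trap ECs have already been moved into $\sinks$, every remaining EC has a Maximizer exit, and at every iteration at least one exit pair of every sub-EC occurs in $\bestExitSet^k$. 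An inductive argument following the attractor-like recursion of \textsc{BestExitSet} then shows that once outer best-exit states have converged to their true values, the predicate \textsc{improvedUB}? becomes enabled for the inner delay-eligible states, which may in turn be updated towards their own true values. Once $U_\infty(s) = L_\infty(s) = \val(s)$ for every $s \in \unknown$, the termination condition $\svistay^k_s \cdot (\gub_k - \glb_k) < 2\varepsilon$ is eventually satisfied because either $\svistay^k_s \to 0$ or $\gub_k - \glb_k \to 0$ uniformly.

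The principal obstacle is justifying convergence of $U_k$ and $L_k$ to $\val$ in the presence of delay actions. A delayed state keeps $\svireach^{k+1}_s = \svireach^k_s$ and $\svistay^{k+1}_s = \svistay^k_s$, so its step-bounded probabilities no longer track a strictly contracting Markov chain and the classical VI termination argument breaks. The proof must therefore couple convergence of delay-eligible states to that of the best-exit states through the hierarchical \textsc{BestExitSet} decomposition: best-exit pairs, being genuine exits, undergo true Bellman updates whose sub-stochastic contribution strictly reduces the staying mass that remains inside the EC; once their $U_k$ values settle, \textsc{improvedUB}? is re-enabled one level deeper, and the inner states then move. Since \textsc{BestExitSet} has finite recursion depth (each recursive call strictly shrinks the EC under consideration by at least the best-exit states it removes), chaining this argument terminates and yields the required joint convergence of $U_k$ and $L_k$ to $\val$, from which termination of the loop follows.
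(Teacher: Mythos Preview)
Your correctness argument matches the paper's. For termination, however, there is a genuine gap in your claim that ``a symmetric argument shows that $(L_k(s))_k$ is non-decreasing.'' The asymmetry is structural: the Maximizer strategy of Eq.~\eqref{eqn:newstrat} (best-exit forcing and delay) is chosen with respect to $\gub_k$ in order to drive the \emph{upper} bound down; there is no reason those same Maximizer choices make $L_k(s) = \svireach^k_s + \svistay^k_s \cdot \glb_k$ monotone in $k$. The paper's introduction flags exactly this point: for SVI with ECs, convergence of the lower bound cannot reuse the standard argument and must instead be derived from properties of the upper bound. Concretely, the paper first proves $U^* \eqdef \lim_k U_k = \val$, and then, in a separate contradiction step, shows $\svistay^k_s \to 0$ for all $s \in \unknown$; from $\svireach^k_s + \svistay^k_s \cdot \gub_k \to \val$ together with $\svistay^k_s \to 0$ one gets $\svireach^k_s \to \val$, hence $L_k \to \val$, without ever needing $L_k$ monotone.

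Your route to $U^* = \val$ also differs from the paper's. You sketch a layer-by-layer induction along the recursion depth of \textsc{BestExitSet}; the paper instead runs the classical BVI-style contradiction: take $X$ to be the set of states where $U^* - \val$ is maximal, show that $X$ must contain an EC, and use the completeness clause of Lemma~\ref{lem:bestExitSetProps} (every non-trap EC has an exit in $\bestExitSet^k$) to conclude that one further iteration strictly decreases $U_k$ on $X$, contradicting that $U^*$ is a fixpoint. Your inductive picture is consonant with how \textsc{BestExitSet} is built, but as written it leaves the crux open: you assert that ``once their $U_k$ values settle, \textsc{improvedUB}?\ is re-enabled one level deeper,'' yet you must first rule out that a best-exit state's $U_k$ settles strictly above its value --- which is precisely what the maximal-difference-set argument delivers.
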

\emph{Proof Sketch.} 
\para{Correctness.} For correctness, we argue that $\svireach_s^k$ and $\svistay_s^k$ are computing $\probability^{\sigma_{k}, \tau_{k}}_{s}(\reach^{\leq k}\targets)$ and $\probability^{\sigma_{k}, \tau_{k}}_{s}(\stay^{\leq k}\unknown)$ %(Lem.~\ref{lem:svi:inv-1-ec} in App.~\ref{app:proofs_sec4}) 
which is aligned with Thm.~\ref{thm:svi-gen-sg-ec}. We ensure that $\glb_k$ and $\gub_k$ are updated correctly in each iteration. %(Lem.~\ref{lem:global-l-u}). 
Then, correctness follows from Thm.~\ref{thm:svi-gen-sg-ec}.

\smallskip \para{Termination.}
We sketch a contradiction-based proof of algorithm termination, even in the presence of ECs. The central hypothesis we refute is that the algorithm does not terminate, despite having reached the fixpoint of the over approximation. Here, we present a high level idea of the proof in five steps. For a detailed proof, please refer to the extended version~\cite{svi-sg-extended}.
%to Lem.~\ref{lem:svi-termination} in App.~\ref{app:proofs_sec4}.
\begin{enumerate}
    \item Firstly, using monotonicity of over-approximation and the property that $\gub_{k+1} \leq \gub_{k}$, we show that the function $\ub_k$, defined as $\forall s \in \states: \ub_k(s) \eqdef \svireach^k_s + \svistay^k_s \cdot \gub_k$  is well defined and has a fixpoint, i.e. $ \ub^*:=\lim_{k \to \infty}\ub_k$. 
	\item Secondly, we assume for contradiction that Alg.~\ref{alg:svi-sg-high-level} does not terminate even when the fixpoint in Step~1 is reached. 
	\item Using this assumption, we define a subsystem $X$ of $\G$, where the difference between $\ub^*$ and $\val$ is maximum, and show the existence of an EC within $X$.
	\item Next, we show that applying another iteration of Alg.~\ref{alg:svi-sg-high-level} leads to a contradiction to Step~2.
    % by using Lem.~\ref{lem:bestExit}. 
    In particular, we refute the assumption that $\ub^*>\val$.
	Thus, we can conclude $\ub^*=\val$. Details appear in~\cite{svi-sg-extended}.

	\item Finally, using $\ub^*=\val$ and a similar contradiction argument, we prove that in the limit $\forall s \in \unknown.~ \svistay^k_s=0$. Therefore, the function $\lb_k$, defined as $\forall s \in \states: \lb_k(s) = \svireach^k_s + \svistay^k_s \cdot \glb_k$  equals to $\svireach^k_s$, which implies that under and over-approximations converges to the same fixpoint.

 %    \item Firstly, using Lem.~\ref{lem:monotonicU} and the property that $\gub_{k+1} \leq \gub_{k}$, we show that the function $\ub_k$, defined as $\forall s \in \states: \ub_k(s) \eqdef \svireach^k_s + \svistay^k_s \cdot \gub_k$  is well defined and has a fixpoint, i.e. $ \ub^*:=\lim_{k \to \infty}\ub_k$. 
	% \item Secondly, we assume for contradiction that Alg.~\ref{alg:svi-sg-high-level} does not terminate even when the fixpoint in Step~1 is reached. 
	% \item Using this assumption, we define a subsystem $X$ of $\G$, where the difference between $\ub^*$ and $\val$ is maximum, and show the existence of an EC within $X$.
	% \item Next, we show that applying another iteration of Alg.~\ref{alg:svi-sg-high-level} leads to a contradiction to Step~2 by using Lem.~\ref{lem:bestExit}. In particular, we refute the assumption that $\ub^*>\val$.
	% Thus, we can conclude $\ub^*=\val$.

	% \item Finally, using $\ub^*=\val$ and a similar contradiction argument, we prove that in the limit $\forall s \in \unknown.~ \svistay^k_s=0$. Therefore, the function $\lb_k$, defined as $\forall s \in \states: \lb_k(s) = \svireach^k_s + \svistay^k_s \cdot \glb_k$  equals to $\svireach^k_s$, which implies that under and over-approximations converges to the same fixpoint.
\end{enumerate}

 \section{Experimental evaluation and illustrative example of SVI's efficiency}
  \label{svi-good-example}
 \myspaceb
 %\begin{figure}
%\vspace{-1cm}
 \begin{wrapfigure}{r}{0.4\textwidth}
 	\vspace*{-0.3in}
 	\centering\myspace\myspace\myspace\myspace
 	\resizebox{0.4\textwidth}{!}{
 	\begin{tikzpicture}
 		\node[max vertex] (s) {$q$};
 		\node[chance state, right of=s, node distance=1cm](c1){};
 		\node[goal, right of=c1, node distance=1.5cm] (t1) {};
 		\node[sink, below of=t1, node distance=2cm] (z1) {};
 		\node[min vertex, below of=s, node distance=2cm] (s') {$r$};
 		\node[initial, min vertex, below left of=s, node distance=1.50cm] (s0) {$p$};
 		\draw (t1) to [out=115,in=65,loop,looseness=6] (t1);
 		\draw (z1) to [out=245,in=290,loop,looseness=6] (z1);
 		\draw (s) to [out=25,in=-25,loop,looseness=6] node[left] {\scriptsize{$a$}}  node[pos=0.15, anchor=south]{\scriptsize{$0.98$}} (s);
 		\path (c1) edge[bend left] node[below] {\scriptsize{$0.01$}}(t1);

 		\path (c1) edge[bend left] node[left] {\scriptsize{$0.01$}}(z1);
 		\path (s0) edge[bend left] node[above] {\scriptsize{$a$}}(s);
 		\path (s0) edge[bend right] node[below] {\scriptsize{$b$}}(s');
 		\path (s') edge[bend right] node[above] {\scriptsize{$c$}}(t1);
 		\path (s) edge[bend right] node[above ] {\scriptsize{$b$}}(z1);
 	\end{tikzpicture}%\myspace
 }
 	\caption{An SG with $\unknown=\{p,q,r\}$}
 	\label{fig:SG-SVI-good}	
 	\myspace\myspace
% 		\vspace*{-0.2in}

 \end{wrapfigure}
 Our prototype implementation of SVI demonstrates comparable iteration efficiency to BVI overall, while outperforming it on benchmarks where traditional VI shows slow convergence issues (due to probabilistic cycles). %App.~\ref{app:experiments} presents experimental comparisons between SVI and BVI.
 The extended version~\cite{svi-sg-extended} includes experimental comparisons between SVI and BVI.
 Below, we provide an example to showcase that the main advantage of SVI is preserved even in SG.
 \begin{example}[SVI solves SG faster]
 In Fig.~\ref{fig:SG-SVI-good}, for the Minimizer in state $p$, playing $a$ is optimal as $b$ would give a value of 1. In state $q$, for the Maximizer playing $a$ is better because $b$ would give value of $0$. 
 
\noindent To solve this small example, BVI needs 685 iterations.
Our new version of SVI is now applicable, even though the example is an SG. 
It solves it in just 2 iterations because it can deal with the probabilistic loop on state $q$ in just one iteration.
 \end{example}

	\section{Topological variants of sound value iteration}
\label{sec:topological}
The core idea of topological value iteration~\cite{TVI1} is to exploit the fact that there is an ordering among states, and certain states can never be reached again. 
More formally, a transition system can be decomposed into a directed acyclic graph of strongly connected components (SCC); the topological variant of a solution algorithm proceeds backwards through this graph, solving it component by component, see~\cite[Sec.~4.4]{MaxiGandalf-journal} for more details.
Also, we note that using the standard topological approach can in practice lead to non-termination~\cite[Sec.~4]{atva22ovi-tvi}.

We propose a new approach of exploiting the topology of the transition system, specialized for SVI.
The strength of SVI is in picking good bounds $\glb$ and $\gub$ quickly. However, the bounds are picked as the minimum/maximum over all states in $\unknown$; thus, as long as there is one state that does not yet have a lot of information, it worsens the bounds for all others. 
In particular, if there is one state with a staying probability of 1, the bounds cannot be updated at all.
These bad bounds additionally slow down convergence because the choice of strategies depends on the bounds.

To address this problem, we propose to modify the definition of the bounds, in particular the term $\opt_{s \in \unknown}~ \frac{\svireach_s^k}{1-\svistay_s^k}$ in Eq~\eqref{eq:def-bounds}, as follows:
instead of globally defining the bounds by picking the optimum among all unknown states, we define a bound for each state $s$ which only picks among those states that are reachable from $s$.\footnote{Formally, let $\mathsf{All\_reachable\_states}(s)$ be the set of states reachable from $s$. Then the bounds for $s$ use the term $\opt_{s \in \unknown \cap \mathsf{All\_reachable\_states}(s)}~ \frac{\svireach_s^k}{1-\svistay_s^k}$}
This is correct because the bound represents the lowest/highest possible value the state $s$ can achieve after staying for $k$ steps; and, naturally, from $s$, the play can only reach a state that is reachable from $s$.
In practice, we do not need to store a bound for every state, but instead one per SCC, because all states in the same SCC can reach each other, and thus also share their set of reachable states.

This improvement allows for more informed choices in parts of the state space where $\svireach$ and $\svistay$ already suffice to derive good bounds, without being hampered by other slow or uninformed parts of the state space. Moreover, we can stop updating an SCC when its lower and upper bound are equal, saving some resources.

	\section{Conclusion}
We have extended the sound value iteration from Markov decision processes to stochastic games.
In order to achieve that we had to lift the key assumption requiring that there are no end components. 
While the literature already suggests approaches to do so for other variants of value iteration, we demonstrated that they do not apply to sound value iteration.
Consequently, we had to design a new dedicated solution.
Additionally, we have proposed improvements exploiting the topological properties of the models.
While the approach is mostly on par with other value iteration approaches providing guarantees on precision, it shows a clear advantage in systems with probabilistic cycles.
Yet, the point here was not a faster tool, but understanding the structure, in particular principles of evaluating cycles in SG. 
The inductive, attractor-like structure of MECs provides a key theoretical foundation for handling cycles. Furthermore, it addresses a critical drawback of VI and opens the path for potential future enhancements to BVI through efficient probabilistic cycle handling.
Further, besides optimized implementation, the future work could include the theoretically interesting extension to concurrent stochastic games.
	\bibliographystyle{eptcs}
	\bibliography{ref}
\end{document}